\newif\ifabstr
\newtheorem{theorem}{Theorem}
\newtheorem{proposition}[theorem]{Proposition}
\newtheorem{observation}[theorem]{Observation}
\newtheorem{lemma}[theorem]{Lemma}
\newtheorem{claim}{Claim}[theorem]
\theoremstyle{remark}
\theoremstyle{definition}
\newtheorem{definition}[theorem]{Definition}
\newcommand{\R}{\mathbb{R}}
\newcommand{\Z}{\mathbb{Z}}
\definecolor{darkgreen}{rgb}{0,.5,0}
\newcounter{sideremark}
\DeclareMathOperator{\sd}{sd}
\DeclareMathOperator{\im}{im}
\DeclareMathOperator{\cat}{cat}
\DeclareMathOperator{\gcat}{gcat}
\DeclareMathOperator{\plgcat}{plgcat}
\title{NP-hardness of computing PL geometric category in dimension 2
\ifabstr\else
\thanks{
This work was supported by GA\v{C}R grant 22-19073S.
MS acknowledges support by the project ``Grant Schemes at CU'' (reg. no. CZ.02.2.69/0.0/0.0/19\_073/0016935).
}\fi}
\author{}
\author{Michael Skotnica}
\author{Martin Tancer}
\affil{Department of Applied Mathematics, Faculty of Mathematics and Physics,
Charles~University, Prague, Czech Republic}
\date{}
\begin{document}

\maketitle

\begin{abstract}
  The PL geometric category of a polyhedron $P$, denoted $\plgcat(P)$, is
  a combinatorial notion which provides a natural upper bound for the
  Lusternik--Schnirelmann category
  and it is defined as the minimum number of PL collapsible subpolyhedra
  of $P$ that cover $P$.  In dimension 2 the PL geometric category is at most~3.

  It is easy to characterize/recognize $2$-polyhedra $P$ with $\plgcat(P) =
  1$. Borghini provided a partial characterization of $2$-polyhedra with
  $\plgcat(P) = 2$. We complement his result by showing that it is NP-hard to decide
  whether $\plgcat(P)\leq 2$. Therefore, we should not expect much more than a
  partial characterization, at least in algorithmic sense. Our reduction is
  based on the observation that 2-dimensional polyhedra $P$ admitting a shellable
  subdivision satisfy $\plgcat(P) \leq 2$ and a (nontrivial) modification of
  the reduction of Goaoc, Pat\'{a}k, Pat\'{a}kov\'{a}, Tancer and Wagner
  showing that shellability of $2$-complexes is NP-hard.
\end{abstract}

\section{Introduction} \label{sec:introduction}
%

An important notion in homotopy theory is the Lusternik--Schnirelmann category
(LS category) 
of a topological space. This notion is important not only as a purely
mathematical object (see, e.g., the book~\cite{cornea-lupton-oprea-tanre03})
but also in computer science as it is closely related to the topological
complexity of motion planning; see, e.g,~\cite{farber03, farber04, farber-mescher20}.

The \emph{LS category}, $\cat(X)$, of a topological space $X$ is the smallest $n$ (if it exists)
such that $X$ can be covered by $n$ open sets so that the inclusion of
each of the open sets is nullhomotopic in $X$. One difficulty when
working with the LS category is that it is often hard to determine. For
example, determining whether $\cat(X) = 1$ is equivalent to contractibility
of $X$. This is known to be undecidable if $X$ is a simplicial complex of
dimension at least $4$; see~\cite[\S 10]{volodin-kuznecov-fomenko74}
and~\cite[Appendix]{tancer16} while it is an open problem whether this is
decidable for simplicial complexes of dimension $2$.\footnote{Via tools
in~\cite{hog-metzler-sieradski93} (using the exercise on page
8) decidability of this problem is equivalent to determining whether a given
balanced presentation of a group presents a trivial group. In this form, the
problem is mentioned for example in~\cite{baumslag-myasnikov-shpilrain02}.}
(In this paper we are mainly interested in complexes of dimension $2$.)

In order to bound the LS category from above we can use some closely related
notions. One of them is the geometric category, $\gcat(X)$, which requires that
the open sets covering $X$ are already contractible.
(For more details see again~\cite{cornea-lupton-oprea-tanre03}.)
If $X$ is a polyhedron,
this is equivalent to finding the minimum number of subpolyhedra covering $X$
each of which is contractible. This may make estimating $\gcat(X)$ sometimes
easier. However, determining whether $\gcat(X) = 1$ is still equivalent to
contractibility of $X$.

Next step in this direction has been done by Borghini~\cite{borghini20} who
introduced PL geometric category $\plgcat(P)$ of a compact (connected)
polyhedron $P$. It is the minimum number of PL collapsible subpolyhedra of $P$
that cover $P$. (See Section~\ref{s:prelim} for the precise
definition of PL collapsibility.) In this case determining whether $\plgcat(P) = 1$ is equivalent to asking whether $P$ is PL collapsible. 
At least for $2$-complexes this is a significant improvement as PL
collapsibility of $2$-complexes is a purely combinatorial notion which is easy to check. Indeed, it is not hard to derive from known results that this is a polynomially checkable criterion (by performing the collapses greedily on an arbitrary triangulation).


\begin{proposition}
\label{p:polynomial}
Given a $2$-dimensional triangulated polyhedron $P$, it can be checked in
  polynomial time whether $\plgcat(P) = 1$.
\end{proposition}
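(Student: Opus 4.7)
The condition $\plgcat(P) = 1$ is equivalent to $P$ being PL collapsible by definition, so my plan is to reduce PL collapsibility of the $2$-polyhedron $P$ to simplicial collapsibility of the supplied triangulation $T$, and then to certify the latter with a naive greedy collapse algorithm.

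The combinatorial heart of the matter is the following claim about a $2$-dimensional simplicial complex $K$: $K$ is simplicially collapsible if and only if \emph{every} maximal sequence of elementary collapses chosen in an arbitrary greedy manner terminates at a single vertex. The key ingredient is a swap lemma: given two distinct free faces $\sigma_1, \sigma_2$ of $K$ with cofaces $\tau_1, \tau_2$, any collapse sequence of $K$ can be rearranged to begin with either pair $(\sigma_i, \tau_i)$. When the pairs are disjoint, commutation is immediate. The dimension restriction $\dim \tau_i \leq 2$ rules out a nested chain $\sigma_2 \subsetneq \sigma_1 \subsetneq \tau_1$, because such a chain would force the vertex $\sigma_2$ to sit inside two strictly larger simplices, violating its freeness. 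What remains is the case $\tau_1 = \tau_2$: two free facets of a common triangle. A direct analysis of this configuration shows that both residual complexes $K \setminus \{\sigma_i, \tau\}$ are simultaneously collapsible, by tracking when the maximal edge left behind (either $\sigma_1$ or $\sigma_2$) is eventually removed in the continuation of the given sequence. An induction then yields the greedy characterization.

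Next I would appeal to known results from PL topology identifying, in dimension $2$, PL collapsibility of $P$ with simplicial collapsibility of any triangulation of $P$. The underlying reason is that the obstructions preventing a greedy collapse from reaching a vertex in a $2$-complex -- edges incident to three or more triangles, closed surface pieces, or Dunce-hat-like singular cores with no free face -- are all preserved under passage to any subdivision, so no subdivision can rescue a stuck greedy collapse.

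The algorithmic check is then routine: the triangulation contains polynomially many simplices, each elementary collapse strictly decreases that number, and a free face can be located in linear time per step, so the total running time is bounded by $O(n^2)$. The main delicate point of the proof is the subcase of the swap lemma in which two facets of a common $2$-simplex are both free; establishing that both residues admit a collapse to a point is the place where dimension $2$ is genuinely used, while all other parts of the argument are essentially straightforward.
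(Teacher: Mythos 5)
Your overall plan coincides with the paper's: reduce $\plgcat(P)=1$ to simplicial collapsibility of the supplied triangulation $K$ and certify that greedily. The genuine gap is at the step you dispatch by ``appealing to known results from PL topology,'' namely that in dimension $2$ PL collapsibility of $P$ coincides with simplicial collapsibility of \emph{any} triangulation of $P$. This is not an off-the-shelf citable theorem; it is exactly the content of the paper's Lemma~\ref{lem:collapsible_subdivision} (together with the appeal to~\cite{hudson69} needed because PL collapsibility only promises \emph{some} collapsible triangulation of $|K|$, not a priori a subdivision of $K$), and the paper has to prove it. Your heuristic justification---that stuck cores with no free face persist under subdivision---is not a proof: the existence of a stuck subcomplex, or of one greedy run on $K$ that gets stuck, does not by itself prevent a subdivision $K'$ from being collapsible, since a collapse sequence of $K'$ need not track the stuck core of $K$ in any way (compare: a disk contains its boundary circle, which is stuck as a standalone complex, yet the disk is collapsible). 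To make your idea rigorous you need a transport lemma---if $K$ collapses to $L$, then any subdivision $K'$ collapses to some subdivision $L'$ of $L$---which is precisely what Observation~\ref{obs:collapsing_of_triangle} and the induction in Lemma~\ref{lem:collapsible_subdivision} supply; only with it can you produce a maximal collapse sequence of $K'$ terminating at the (still stuck) subdivided core and then invoke the greedy property to conclude $K'$ is not collapsible. Without that lemma, ``no subdivision can rescue a stuck greedy collapse'' is an assertion, not an argument.

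Conversely, you invest your rigor in the one ingredient that genuinely is citable folklore: the greedy property of collapses of $2$-complexes (Proposition~\ref{p:greedy_collapses}, with references~\cite{hog-metzler-sieradski93, malgouyres-frances08}). Moreover, your sketch of the delicate case $\tau_1=\tau_2$ is weaker than you suggest. Take $K$ to be a single closed triangle $abc$ and the sequence $(ab,abc)$, $(a,ac)$, $(c,bc)$; in the companion complex $K\setminus\{bc,abc\}$ the step $(a,ac)$ cannot be mimicked at all, since there $a$ has the second maximal coface $ab$. So it is not merely a matter of ``tracking when the leftover maximal edge is removed'' and swapping that one step: collapses at the shared vertices can be valid in one branch and invalid in the other, and the whole continuation may need rerouting (in this example the companion collapses via $(c,ac)$, $(a,ab)$ instead). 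This is fixable and is done in the cited literature, but as written it is a hole---and the economical choice is to cite it, as the paper does, and spend the proof effort on subdivision-invariance, which is where the real content of the proposition lies.
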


Borghini further proved~\cite{borghini20} that a connected $d$-dimensional polyhedron has
PL geometric category at most $d+1$. For connected $2$-polyhedra, this means
that the only options are $1$, $2$, or $3$. One of the main aims
in~\cite{borghini20} is to provide a partial characterization of polyhedra $P$
with $\plgcat(P) \leq 2$ (which we do not reproduce here). All these
positive results suggest that 
determining $\plgcat(P)$ could be easy for $2$-polyhedra.
In particular, one should be curious whether it is possible to extend Borghini's results
to a \emph{full characterization} that would distinguish $2$-polyhedra with PL geometric
category equal to $2$ from those for which it equals $3$.

We will show that this is essentially impossible, at least for
an efficiently algorithmically checkable characterization. In technical terms, we show
that determining whether $\plgcat(P) \leq 2$ is NP-hard. Let us point out that NP-hard
problems are believed not to be solvable in polynomial time. (This is
equivalent to the standard conjecture P$\neq$NP in theory of computation.
NP-hardness is discussed in a bit more detail in Section~\ref{s:prelim}.)


\begin{theorem}
\label{t:main}
  Given a $2$-dimensional triangulated polyhedron $P$, it is NP-hard to decide
  whether $\plgcat(P) \leq 2$.
\end{theorem}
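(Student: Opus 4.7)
The plan is to reduce from the problem of deciding shellability of 2-dimensional simplicial complexes, shown to be NP-hard by Goaoc, Pat\'{a}k, Pat\'{a}kov\'{a}, Tancer, and Wagner. The reduction rests on two ingredients: an auxiliary lemma asserting that every 2-polyhedron admitting a shellable subdivision has $\plgcat(P) \leq 2$, together with a careful adaptation of their reduction so that the reverse implication also holds on the constructed instances. Note that by Borghini's theorem and Proposition~\ref{p:polynomial}, distinguishing $\plgcat \leq 2$ from $\plgcat = 3$ is the only remaining case, so the produced polyhedra just need to realise these two values.

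For the lemma, given a shelling $\sigma_1, \ldots, \sigma_n$ of a 2-complex $P$, set $P_i := \sigma_1 \cup \cdots \cup \sigma_i$ and call $\sigma_i$ \emph{closing} if $\sigma_i \cap P_{i-1} = \partial \sigma_i$, and \emph{regular} otherwise. After a sufficient subdivision I would pick a small open disk $D_i$ in the interior of each closing facet $\sigma_i$ and choose a tree of thin polygonal arcs in $P$ connecting these disks; define $B$ as the union of the closures of the $D_i$'s together with this tree, and $A := \overline{P \setminus \bigcup_i D_i}$. Then $B$ is a tree of disks and hence PL collapsible. For $A$, I would collapse in reverse shelling order: a closing facet becomes an annulus whose inner boundary is a free circle and collapses onto its outer boundary (which lies in $P_{i-1}$), while a regular facet, once all later facets have been removed, exposes a free edge and collapses as a triangle. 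Thus $A$ and $B$ witness $\plgcat(P) \leq 2$.

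For the reduction itself, I would take an instance $K$ produced by the shellability reduction above and augment it with a collection of auxiliary gadgets to form a 2-polyhedron $P_K$ such that $\plgcat(P_K) \leq 2$ if and only if $K$ is shellable. The forward direction follows directly from the lemma, since shellability of $K$ extends to shellability of $P_K$. The hard direction, which I expect to be the main obstacle, is the converse: any cover $P_K = A' \cup B'$ by two PL collapsible subpolyhedra must in effect ``open'' each essential 2-cycle of $P_K$ at a controlled location and in an order compatible with a valid shelling of $K$. The gadgets would be designed to eliminate flexibility in where the interface $A' \cap B'$ can lie, for instance by attaching dunce-hat-like or sphere-like pieces that force the interface to pass through prescribed bottlenecks. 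The analysis would then exploit the specific structure of the ``no'' instances output by the Goaoc et al.\ reduction to rule out any collapsible pair $(A', B')$ when $K$ is not shellable, a step that likely requires a fresh combinatorial argument going beyond the homological cut-complex analysis used there.
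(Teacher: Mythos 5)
Your proposal reproduces the paper's motivating framework (reduce via the shellability construction of Goaoc et al.\ and use the fact that a shellable subdivision forces $\plgcat \leq 2$, cf.\ Proposition~\ref{p:shell}), but it stops exactly where the actual proof begins: the design of the gadget and the proof of the converse direction are left as an unfulfilled hope (``gadgets would be designed \dots a step that likely requires a fresh combinatorial argument''). That step is the entire mathematical content of the theorem, and the paper even shows your fallback position is untenable: every complex $K_\phi$ output by the shellability reduction already satisfies $\plgcat(|K_\phi|) = 2$, shellable or not, so no analysis of the unmodified instances can work---some augmentation is provably necessary, and which augmentation is the crux. The paper's answer is to attach a \emph{torus} $T_\tau$ to every triangle $\tau$ of $K_\phi$, identifying the longitude $\lambda_\tau$ with $\partial\tau$. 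The reason this works is homological, not combinatorial: Lemma~\ref{lem:attached_torus} shows (via Lefschetz duality on a regular neighborhood plus Mayer--Vietoris) that if two \emph{contractible} subpolyhedra cover the result, then each longitude must lie in both pieces and be nullhomologous in the part of each piece outside its torus. This is what pins down the interface and, after a sequence of collapsing arguments, shows that the restriction of one covering piece to (a subdivision of) $K_\phi$ is collapsible and misses exactly one triangle in each sphere $S_i$, so that Proposition~\ref{p:construction}(iv) yields satisfiability. Your suggested ``dunce-hat-like'' gadgets point in the wrong direction: a contractible attachment imposes no such homological constraint on a cover by contractible sets; it is precisely the nontrivial $H_1$ of the torus that makes the duality/Mayer--Vietoris argument bite.

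A secondary but real problem is your forward direction. You claim ``shellability of $K$ extends to shellability of $P_K$,'' which forces the gadgets to preserve shellability---in direct tension with the rigidity the converse needs. Indeed, the paper's enriched complex $K_\phi^+$ is never shellable (it has nontrivial first homology coming from the tori, while pure shellable $2$-complexes are homotopy equivalent to wedges of $2$-spheres), so the forward implication cannot go through shellability of the constructed instance at all. Instead the paper exhibits the two collapsible subcomplexes directly: Proposition~\ref{p:construction}(iii) gives two collapsible subcomplexes of $K_\phi$ (removing one of two possible triangles from each $S_i$), each containing the full $1$-skeleton, and Observation~\ref{obs:attached_torus_plgcat} extends them over $K_\phi^+$ by splitting each torus into two annuli, one collapsing onto $\lambda_\tau$ inside each piece. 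So both halves of your equivalence need to be replaced, not just completed.
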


We should also point out that we actually do not know whether recognition of
triangulated polyhedra with $\plgcat(P) \leq 2$ belongs to the class NP (not
even whether it is decidable). This could be certified by two subpolyhedra
witnessing $\plgcat(P) \leq 2$ but we do not know whether we can bound their
sizes.

A useful step towards our proof is that we observe a relation between $\plgcat(P) \leq 2$ and
shellability (of some triangulation) of $P$.
(Shellability will be discussed in
detail in Section~\ref{s:prelim}.)

\begin{proposition}
\label{p:shell}
  If a $2$-dimensional polyhedron $P$ admits a (pure) shellable triangulation, then
  $\plgcat(P) \leq 2$.
\end{proposition}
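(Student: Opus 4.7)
The plan is to use the shelling to cover $P$ by two PL collapsible subpolyhedra. Let $T$ be the given pure shellable triangulation with shelling $\sigma_1,\ldots,\sigma_n$. For each $i\geq 2$ the attaching region $\sigma_i \cap (\sigma_1\cup\cdots\cup\sigma_{i-1})$ is a nonempty pure $1$-dimensional subcomplex of $\partial\sigma_i$, consisting of $1$ edge, $2$ edges sharing a vertex, or all $3$ edges of $\sigma_i$; call $\sigma_i$ \emph{critical} in the last case and let $I_c$ denote the set of critical indices. If $I_c=\emptyset$, then reversing the shelling yields a collapsing sequence of $T$: at each step $\sigma_i$ (for $i=n,n-1,\ldots,2$) has a newly added free edge through which it can be collapsed, and residual dangling edges collapse as they become free. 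Hence $\plgcat(P)=1$ and we are done; so assume $I_c\neq\emptyset$.

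The construction stellarly subdivides each critical triangle $\sigma_j$ ($j\in I_c$) by inserting its barycenter $b_j$ and joining it to the three vertices of $\sigma_j$, thereby splitting $\sigma_j$ into three small triangles meeting at $b_j$, each containing one edge of $\partial\sigma_j$. In the resulting subdivision $T'$, pick for each $j\in I_c$ one small triangle $\tau^{(j)}$, denote the edge of $\partial\sigma_j$ it contains by $a_j$, and denote the remaining two small triangles by $\sigma^{(j)}_+,\sigma^{(j)}_-$. Let $P_1$ be the subcomplex of $T'$ spanned by the triangles $\sigma_k$ for $k\notin I_c$ together with $\sigma^{(j)}_+,\sigma^{(j)}_-$ for $j\in I_c$, and let $P_2$ be the union
\[ P_2 := \bigcup_{j\in I_c}\tau^{(j)} \;\cup\; \Gamma, \]
where $\Gamma$ is a tree in the $1$-skeleton of $T'$ that contains every barycenter $b_j$ (such a tree exists because $P$ is connected). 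Then $P_1\cup P_2=P$, and $P_2$ is PL collapsible: after possibly refining the subdivision so that $\Gamma$ meets each $\tau^{(j)}$ only at $b_j$, each $\tau^{(j)}$ collapses through its free edges to $b_j$ and the remaining tree $\Gamma$ collapses to a point.

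The crux is to show $P_1$ is PL collapsible. The plan is to exhibit a shelling of $P_1$ in which no step attaches along all three edges of the added triangle, and then reverse it. Process the triangles of $P_1$ in the order inherited from the shelling of $T$: each non-critical $\sigma_k$ is added as before, and each critical $\sigma_j$ is replaced by the two small triangles $\sigma^{(j)}_+,\sigma^{(j)}_-$ added consecutively---the first attaching along one edge of $\partial\sigma_j$ (case (a)) and the second attaching along its boundary edge together with the internal edge $b_j v$ shared with $\sigma^{(j)}_+$ (case (b)). The main obstacle, on which most of the technical work will be concentrated, is a potential \emph{missing edge} issue: the edge $a_j$ is not a face of $\sigma^{(j)}_+$ or $\sigma^{(j)}_-$, so if some later triangle needs $a_j$ as an attaching edge (or if $a_j$ is itself a boundary edge of a later critical $\sigma_{j'}$), then $a_j$ must already be present in the partial $P_1$ complex via some other triangle. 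I would address this by choosing $\tau^{(j)}$ so that $a_j$ is shared with some earlier non-critical triangle, and the delicate point is to show that such a consistent choice can be made simultaneously for all $j\in I_c$---possibly via a Hall-type matching argument on the incidence between critical triangles and their edges, or by separately handling the degenerate configurations where every edge of some critical $\sigma_j$ lies only in earlier critical triangles.
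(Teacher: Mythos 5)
Your overall strategy (cover $P$ by ``everything except one piece of each critical triangle'' together with ``the removed pieces joined by a tree'') is sound and close in spirit to the paper's proof, but your write-up has a genuine gap at exactly the point you yourself flag as the crux: you never prove that $P_1$ is PL collapsible. The source of the difficulty is your choice of subdivision. By stellarly subdividing a critical triangle $\sigma_j$ at its barycenter, every one of the three small triangles contains an edge of $\partial\sigma_j$, so whichever piece $\tau^{(j)}$ you delete takes an \emph{original} edge $a_j$ of the $1$-skeleton with it---unless $a_j$ happens to survive inside some other triangle of $P_1$. When every triangle containing $a_j$ is critical and the deleted pieces of those triangles also contain $a_j$, the edge $a_j$ is absent from $P_1$ altogether; then the attaching regions of later triangles (or of the pieces $\sigma^{(j')}_{\pm}$ of later critical triangles sharing $a_j$) degenerate to sets of vertices, and your modified sequence is no longer a shelling, so there is nothing to reverse. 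You acknowledge this and propose a Hall-type matching or a case analysis, but neither is carried out, and the consistency problem is global (an edge may lie in many triangles, and the complex need not be a manifold), so it is not a routine verification. As it stands, the proof is incomplete.

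The gap is repairable, and the repair highlights the contrast with the paper. The paper never deletes a piece touching the original $1$-skeleton: it invokes Hachimori's theorem (Theorem~\ref{t:hachimori}) to obtain triangles $\tau_1,\dots,\tau_\ell$ whose removal leaves a collapsible complex, then subdivides each $\tau_i$ into seven parts and deletes only the \emph{middle} triangle, which is disjoint from $\partial\tau_i$; hence the first subpolyhedron contains the entire original $1$-skeleton and collapses onto the Hachimori complex $K'$, while the second is the union of the middle triangles joined along a tree. You can do the same within your framework: subdivide each critical $\sigma_j$ so that the deleted piece lies in the \emph{interior} of $\sigma_j$ (for instance the seven-piece subdivision of Figure~\ref{fig:subdivision_of_F}). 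Then $P_1$ contains the full original $1$-skeleton, every choice is automatically consistent, and your reverse-shelling argument closes: processing triangles in reverse order, each non-critical triangle collapses onto its attaching region, and each punctured critical triangle is an annulus that collapses onto $\partial\sigma_j$ starting from the free edges around the hole. With this change your argument becomes a correct, self-contained proof that uses only the shelling itself rather than Hachimori's theorem, which would be a genuinely different route from the paper's.
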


It has been shown by Goaoc, Pat\'ak, Pat\'akov\'a, Tancer and
Wagner~\cite{goaoc-patak-patakova-tancer-wagner19} that shellability is NP-hard
already for $2$-dimensional simplicial complexes. In addition, the reduction
in~\cite{goaoc-patak-patakova-tancer-wagner19} is quite resistant with respect
to subdivisions. Thus, we could hope to prove Theorem~\ref{t:main} in the
following way: Consider a complex $K$ that appears in the reduction
in~\cite{goaoc-patak-patakova-tancer-wagner19}. If $K$ is shellable, then
$\plgcat(|K|) \leq 2$  by Proposition~\ref{p:shell} ($|K|$ stands for the
polyhedron of $K$). If we were able to show the other implication: `if $K$ is
not shellable, then $\plgcat(|K|) = 3$', we would immediately get a proof of
Theorem~\ref{t:main}. Unfortunately, the other implication, stated this way, is
not true: with some more effort (which we do not do here), it could be shown
that every complex $K$ from the reduction
in~\cite{goaoc-patak-patakova-tancer-wagner19} satisfies $\plgcat(|K|) = 2$.
However, this problem can be circumvented. We construct certain enriched
complex $K^+$ (by attaching a torus in a suitable way to every triangle of
$K$---it may be slightly surprising that this indeed helps). It turns out that $\plgcat(|K^+|)$ stays $2$ for shellable $K$ but it
grows to $3$ for non-shellable $K$ (coming
from~\cite{goaoc-patak-patakova-tancer-wagner19}). This will prove
Theorem~\ref{t:main}.

We point out that Proposition~\ref{p:shell} as stated is not
really necessary in the proof of Theorem~\ref{t:main}. But we state it here as
it provides the motivation for our approach as well as it can be seen as a
complementary result to the results of Borghini~\cite{borghini20} providing
some sufficient (or necessary)
conditions for $\plgcat(P) \leq 2$.

We also point out that instead
of~\cite{goaoc-patak-patakova-tancer-wagner19}, it would be in principle
possible to use also a modification of reduction by Santamar\'{\i}a-Galvis and
Woodroofe~\cite{santamariagalvis-woodroofe21} where some of the gadgets are
slightly simplified. However, some intermediate steps
in~\cite{goaoc-patak-patakova-tancer-wagner19} are done via collapsibility thus
for purposes of this paper it is easier to adapt to the setting
in~\cite{goaoc-patak-patakova-tancer-wagner19}.

\paragraph{Organization.} Proposition~\ref{p:polynomial} is proved
in Section~\ref{s:collapsibility}; Theorem~\ref{t:main} is proved in
Section~\ref{s:NP-hardness} and Proposition~\ref{p:shell} is proved in
Section~\ref{s:prelim}.

\section{Preliminaries}
\label{s:prelim}

\paragraph{Simplicial complexes, polyhedra and subdivisions.} 
Although we assume that the reader is familiar with simplicial
complexes (abstract or geometric), we briefly recall these notions up to the
level we need in this paper.
Because we also want
to work with polyhedra, we will be using geometric simplicial complexes
(with a single exception that the input for any computational problem we
consider is the corresponding abstract simplicial complex).
That is, a \emph{simplicial complex} is
for us a collection of (geometric) simplices embedded in some $\R^m$ such that
two simplices intersect in a face of both of them; and a face of any simplex in the
complex belongs again to the complex. The \emph{dimension} of a simplex
is the number of its vertices minus one; the \emph{dimension} of a simplicial complex
is the maximum of the dimensions of simplices appearing in the complex.
When considering a simplicial
complex as input of an algorithmic problem, we switch to the corresponding
abstract simplicial complex. Roughly speaking, it records only the
combinatorial information which vertices form a simplex.
For more details on simplicial complexes,
we refer to textbooks such as~\cite{rourke-sanderson82, matousek03}.

We work with \emph{polyhedra} as defined in~\cite{rourke-sanderson82}. When we say `polyhedron' we always mean a compact
polyhedron. Because every compact polyhedron can be triangulated, an equivalent
definition is that a polyhedron is the underlying space $|K| := \bigcup_{\sigma
\in K} \sigma$ of some finite
simplicial complex $K$ (a.k.a. the \emph{polyhedron of $K$}).

A simplicial complex $K'$ is a \emph{subdivision} of a complex $K$ if
$|K'| = |K|$ and every $\sigma' \in K'$ is a subset of some $\sigma \in K$.
Given a subcomplex $L$ of $K$, then the \emph{subcomplex $L'$ of $K'$
corresponding to $L$ is the complex $L' := \{\sigma' \in K'\colon \sigma
\subseteq |L|\}$. 
}

\paragraph{Collapsibility and PL collapsibility.} 
Given a simplicial complex $K$, a face $\sigma \in K$ is \emph{free}, if it is contained
in a unique maximal face. A complex $K$ collapses to a subcomplex $K'$ by an elementary collapse, if $K'$ is obtained from $K$ by
removing a pair of faces $\{\sigma, \tau\}$ where $\dim \tau = \dim \sigma +
1$, $\sigma$ is free and $\tau$ is the maximal face containing $\sigma$.\footnote{Some authors allow more general elementary collapse removing
a face $\sigma$ and all faces containing it provided that $\sigma$ is contained
in a unique maximal face. This is only a cosmetic change in the resulting
notion of collapsible complex because this more general elementary collapse can
be emulated by a sequence of elementary collapses according to our definition.}
We also say in this case that $K$ collapses to $K'$ \emph{through $\sigma$}.
A simplicial complex $K$ 
\emph{collapses} to a subcomplex $L$, if there is a
sequence of elementary collapses starting collapsing $K$ gradually to $L$. A
complex $K$ is collapsible, if $K$ collapses to a point. Because
collapsibility preserves the homotopy type, collapsibility of a simplicial
complex is often understood as a combinatorial counterpart of the notion of
contractibility of a topological space. (In particular, collapsibility of a
complex implies contractibility of the corresponding polyhedron.)

A polyhedron $P$ is \emph{PL collapsible} if some triangulation of $P$ is a
collapsible simplicial complex. Similarly, a simplicial complex $K$ is \emph{PL
collapsible} if $|K|$ is a PL collapsible polyhedron. Here, we should point
out a certain subtlety in the definition of PL collapsible simplicial complex:
If $K$ is PL collapsible, then there is some triangulation $K'$ of $|K|$ which
is collapsible (in the simplicial sense). This triangulation $K'$ needn't be a
priori a subdivision of $K$. However, by~\cite[Theorem~2.4]{hudson69} we may
assume that $K'$ actually is a subdivision of $K$. This also affects our
earlier definition of $\plgcat(|K|)$. We get $\plgcat(|K|) \leq k$ if and only
if some subdivision of $K$ can be covered by $k$ collapsible subcomplexes while
it does not matter with which triangulation of $|K|$ we start.

In general, collapsibility and PL
collapsibility of a simplicial complex differ because PL collapsibility allows
an arbitrarily fine subdivision before starting the collapses. In this paper,
we need both and we carefully distinguish these two notions.

\paragraph{Shellability.}
A simplicial complex $K$ is \emph{pure} if all its (inclusion-wise) maximal faces have the same
dimension. A \emph{shelling} of a pure complex $K$ is an ordering of all its
maximal faces into a sequence $\vartheta_1, \dots, \vartheta_m$ such that for
every $k \in \{2, \dots, m\}$ the subcomplex of $K$ with the underlying space
$(\bigcup_{i = 1}^{k-1} \vartheta_i) \cap \vartheta_k$ is pure and $(\dim
\vartheta_k - 1)$-dimensional. (Here we use the notation for geometric
simplicial complexes, thus $\vartheta_1, \dots, \vartheta_m$ are actual
geometric simplices.) A complex $K$ is \emph{shellable} if it admits a
shelling.

There are some similarities between collapsible and shellable simplicial
complexes. However, in general, these two notions differ. For example,
on the one hand
a collapsible complex is always contractible as an elementary collapse keeps the
homotopy type but shellable complexes
need not be contractible. On the other hand,
the union of two triangles meeting in a single vertex is a
complex which is collapsible but not shellable. The following description
of $2$-complexes admitting a shellable subdivision has been given by
Hachimori~\cite{hachimori08}.

\begin{theorem}[\cite{hachimori08}]
\label{t:hachimori}
 Let $K$ be a $2$-dimensional simplicial complex. Then the following statements are equivalent: 
\begin{enumerate}[(i)]
  \item The complex $K$ has a shellable subdivision.
  \item The second barycentric subdivision $\sd^2 K$ is shellable.
  \item The link of each vertex of $K$ is connected
    and $K$ becomes collapsible after removing $\tilde \chi (K)$ triangles
    where $\tilde \chi$ denotes the reduced Euler characteristic.
\end{enumerate}
\end{theorem}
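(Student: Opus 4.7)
The plan is to establish the chain of implications (ii) $\Rightarrow$ (i) $\Rightarrow$ (iii) $\Rightarrow$ (ii). The first of these is immediate since $\sd^2 K$ is by definition a subdivision of $K$; the real work lies in the other two implications.

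For (i) $\Rightarrow$ (iii), start with a shelling $\vartheta_1, \dots, \vartheta_m$ of some subdivision $K'$ of $K$. Classify each triangle $\vartheta_k$ by how many of its three edges lie in $\bigcup_{i<k} \vartheta_i$: either 1, 2, or 3. The type-3 triangles are exactly those that close off a 2-sphere in the homotopy type, so there are exactly $\tilde\chi(K')$ of them; remove them and reverse the remaining shelling order to see that the result collapses (each remaining triangle has at least one free edge at the moment it is removed). Thus (iii) holds for $K'$. To transfer to $K$, note that $\tilde\chi(K')=\tilde\chi(K)$ and that vertex-link connectedness is a local topological invariant of $|K|$ (it can be read off $\tilde H_0$ of a punctured neighborhood), hence a property of $K$ if and only if of $K'$. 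For the collapsibility part, pick a set of $\tilde\chi(K)$ triangles of $K$ whose subdivisions contain the removed type-3 triangles of $K'$, and show that the complement in $K$ is collapsible by mimicking the shelling-induced collapse, using that all reasoning takes place in dimension $\le 2$.

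For (iii) $\Rightarrow$ (ii), let $\Sigma$ be a set of $\tilde\chi(K)$ triangles such that $L := K \setminus \Sigma$ is (simplicially) collapsible. Invoke the classical fact that the second barycentric subdivision of a collapsible complex is shellable (a Bing/Whitehead-style result; in dimension $2$ it can be produced by hand from the collapsing sequence) to obtain a shelling of $\sd^2 L$. Extend it to a shelling of $\sd^2 K$ by appending, for each $\sigma \in \Sigma$ in some order, a shelling of the 2-disk $\sd^2\sigma$ that starts from its boundary (which already lies in what has been built). Link-connectedness at vertices of $K$ is exactly what guarantees that when we glue $\sd^2\sigma$, the intersection with the previously built subcomplex is pure $1$-dimensional along $\partial\sigma$, so no shelling step creates a bad (non-pure) attachment.

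The main obstacle is clearly the transfer of property (iii) between $K$ and its subdivisions in the (i) $\Rightarrow$ (iii) direction, since simplicial collapsibility is notoriously not preserved under subdivision. The low dimension is essential here: the obstruction to collapsibility in a shellable $2$-complex is entirely captured by the top Betti number, which equals $\tilde\chi$, and this numerical invariant is insensitive to subdivision. This is where Hachimori's argument requires the most care, and I would expect to spend most of the effort on making the triangle-matching between $K$ and $K'$ work precisely.
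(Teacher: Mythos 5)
First, a point of comparison: the paper does not prove this theorem at all---it is imported as a black box from Hachimori~\cite{hachimori08}---so your proposal can only be judged on its own merits, and there it has a fatal gap. The direction (iii)$\Rightarrow$(ii) rests on an alleged ``classical fact'' that the second barycentric subdivision of a collapsible complex is shellable. This is false, and false already in dimension~$2$: let $K_0$ consist of two triangles sharing a single vertex $v$. This complex is collapsible, but in any subdivision of it the link of $v$ is two disjoint arcs, hence disconnected; since every pure shellable $2$-complex has connected vertex links (an easy induction along the shelling, checking the three attachment types), no subdivision of $K_0$---in particular $\sd^2 K_0$---is shellable. Your sketch also misplaces where link-connectedness enters: the gluing of $\sd^2\sigma$ for $\sigma\in\Sigma$ is unproblematic \emph{regardless} of links, because removing triangles keeps all edges, so $\sd^2\sigma$ always meets the previously built complex in exactly the subdivided $\partial\sigma$, which is pure $1$-dimensional. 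What link-connectedness is really needed for is the shellability of $\sd^2 L$ itself, where $L=K\setminus\Sigma$---and the statement ``collapsible with connected vertex links implies $\sd^2$ shellable'' is precisely the $\tilde\chi=0$ case of the theorem you are proving; it is the core of Hachimori's contribution (his proof goes through constructibility of the second subdivision), not a citable Bing/Whitehead-type result. So this direction of your argument is circular: the theorem is reduced to a special case which is then asserted. (There is also a hidden step you would owe even then: that $L$ inherits connected links from $K$. This is true---since $H_1(L)=0$, any edge $xy$ of $L$ with $vx,vy\in L$ has both endpoints in one component of $\lk_L(v)$, so the triangles of $\Sigma$ never bridge distinct link components---but it requires an argument you do not give.)

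The direction (i)$\Rightarrow$(iii) is sound at the level of the subdivision $K'$: the Euler-characteristic count of type-3 triangles, the reverse-shelling collapse, and the transfer of $\tilde\chi$ and of link-connectedness are all correct. But the step you defer---transferring the collapsibility statement from $K'$ to $K$---is where the remaining content lies, and ``mimicking the shelling-induced collapse'' does not describe a proof. Concretely you need: (a) that the $\tilde\chi(K)$ type-3 triangles of the shelling lie in pairwise distinct triangles of $K$ (true, because every $2$-cycle of $K'$ is the subdivision of a $2$-cycle of $K$ and a type-3 triangle completes such a cycle, so the rest of that subdivided triangle of $K$ is already present---but you never argue this, and if it failed, your recipe would remove too few distinct triangles of $K$, or padding the set would yield a complex with $\tilde\chi<0$ that cannot be collapsible); and (b) a passage from ``$K'$ minus the type-3 triangles is collapsible'' to ``$K$ minus the chosen triangles of $K$ is collapsible'', which needs a greedy-collapse argument emptying each subdivided chosen triangle down to its boundary, followed by the genuinely nontrivial fact that collapsibility of $2$-complexes is invariant under subdivision (this is Lemma~\ref{lem:collapsible_subdivision} of the present paper, with its own proof). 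In summary: (ii)$\Rightarrow$(i) is fine, (i)$\Rightarrow$(iii) is a correct plan with substantial details omitted, and (iii)$\Rightarrow$(ii)---the heart of the theorem---is not proved but assumed.
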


Hachimori's theorem easily implies Proposition~\ref{p:shell}:

\begin{proof}[Proof of Proposition~\ref{p:shell}]
  Let $K$ be a pure shellable triangulation of $P$. By 
  Theorem~\ref{t:hachimori} there is a list of triangles $\tau_1, \dots,
  \tau_\ell$ such that the resulting complex $K'$ is collapsible after removing
  these triangles. Now we build an auxiliary complex $L$ from $K$ by
  subdividing each of the triangles $\tau_1, \dots, \tau_\ell$ as in
  Figure~\ref{fig:subdivision_of_F}. We also build a complex $L'$ by removing
  the middle triangle $\tau'_i$ from each subdivided $\tau_i$ in $L$. The complex $K'$ is
  a subcomplex of $L'$ and it is not hard to see that $L'$ collapses to $K'$.
  Hence $L'$ is collapsible as well. Then $|L'|$ is one of the two collapsible
  polyhedra covering $P$. The second polyhedron is obtained by taking the union
  of $\tau'_i$ and connecting them along the $1$-skeleton of $L$ so that the
  resulting complex is collapsible (the connection along the $1$-skeleton of
  $L$ can be, for example, obtained so that we pick two edges in each triangle
  and then we extend this forest to a spanning tree).
\end{proof}

\begin{figure}
  \centering
  \includegraphics[scale=1]{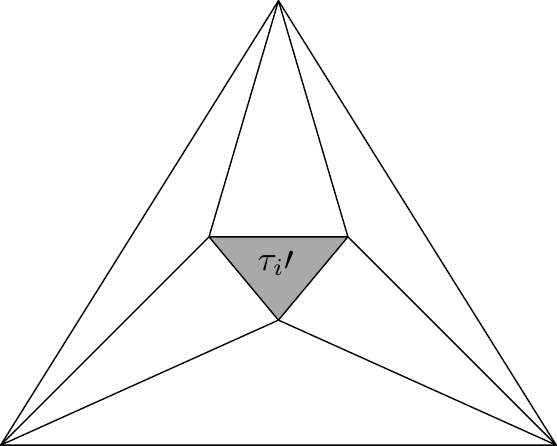}
  \caption{A subdivision of triangle $\tau_i$ into seven parts
    from the proof of Proposition~\ref{p:shell}.
  }
  \label{fig:subdivision_of_F}
\end{figure}

\paragraph{Homology.} In our auxiliary computations, we will often need homology groups,
including the exact sequence for pairs, the Mayer-Vietoris exact sequence and
the Lefschetz duality. In general, we refer to the literature such
as~\cite{hatcher02, munkres84} for details (in case of Lefschetz duality, we
will recall its statement when used).

In all our computations, we work with homology with $\Z_2$-coefficients. 
When working with simplicial complexes, we use simplicial homology. In
particular, when we speak of $k$-chains, then we can identify a $k$-chain with
a collection of $k$-simplices (in its support).
(Similarly, a $k$-cycle is such a collection with trivial boundary; i.e., each
$(k-1)$-simplex is in an even number of $k$-simplices of the cycle.) 
In case of polyhedra, we use
singular homology. However, we of course implicitly use that the simplicial and
singular homology groups are (naturally) isomorphic (for a simplicial complex
and its polyhedron).

%
%

\paragraph{NP-hardness and satisfiability.}
Here we briefly overview a few notions from computational complexity we need in this paper.
For more details see, e.g.,~\cite[Chapter~2]{arora-barak09}.
 

  A decision problem belongs to the \emph{class NP} if an affirmative answer to it can be verified in polynomial time
  using a certificate of polynomial size.
  A decision problem $X$ is \emph{NP-hard} if for each problem $Y$ from the class NP there is a polynomial time reduction from $Y$ to $X$.
  More precisely, given an instance $q$ of
  the problem $Y$
  one can construct in polynomial time in the size of $q$ an instance $p$ of the problem $X$ such that the answer to $q$ is yes
  if and only if the answer to $p$ is yes.

  An important NP-hard problem is the so called $3$-satisfiability problem (it
  also belongs to NP).
  An input for the $3$-satisfiability problem is a $3$-CNF formula~$\phi$,
  that is, a boolean formula in conjunctive normal form
  where every clause contains exactly three literals.\footnote{A
  \emph{literal} is some variable $x$ or its negation $\neg x$; a \emph{clause}
  with $3$ literals is a (sub)formula of form $(\ell_1 \vee \ell_2 \vee \ell_3)$
  where $\ell_i$ are literals. A formula $\phi$ is in conjunctive normal form if
  it can be written as $\phi = c_1 \wedge c_2 \wedge \cdots \wedge c_m$ where
  $c_j$ are clauses. An example of a $3$-CNF formula is $(x \vee \neg y \vee z)
  \wedge (\neg x \vee \neg y \vee t)$.}
  The output is the answer whether the formula is satisfiable,
  that is, whether it is possible to assign the variables TRUE or FALSE so that the formula evaluates to TRUE in this
  assignment.

  It is well known that $3$-satisfiability is NP-hard. In order to show that
  another problem $X$ is NP-hard, it is sufficient to construct a polynomial
  time reduction from $3$-satisfiability to $X$.
    
  
  


\section{PL collapsibility of 2-complexes}
\label{s:collapsibility}

It is a folklore result going back at least to Lickorish (according
to~\cite{hog-metzler-sieradski93}) that simplicial $2$-complexes can be
collapsed greedily:

\begin{proposition}[see~{\cite[page~20]{hog-metzler-sieradski93}}
  or~{\cite[Lemma~1 + Corollary~1]{malgouyres-frances08}}]\label{p:greedy_collapses}
  Let $K$ be a collapsible $2$-complex. Assume that $K$ collapses to a subcomplex
  $L$. Then $L$
  is collapsible as well. In particular, it can be checked in
  polynomial time whether a simplicial $2$-complex is collapsible.
\end{proposition}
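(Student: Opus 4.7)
The plan is to establish a single-step confluence property---if $K$ is a collapsible $2$-complex and $K$ collapses in one elementary step to $K'$, then $K'$ is also collapsible---and iterate it along the sequence of elementary collapses from $K$ to $L$ to conclude that $L$ is collapsible. The polynomial-time algorithm is then immediate: greedily perform elementary collapses on any triangulation of $K$ until no free face remains; by confluence, the procedure terminates at a single vertex if and only if $K$ is collapsible, since a collapsible complex with more than one vertex must admit at least one elementary collapse.

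I would prove the single-step confluence by strong induction on the number of simplices of $K$. Fix a collapsing sequence of $K$ to a point beginning with a pair $(\alpha, \beta)$, so that $K_1 := K \setminus \{\alpha, \beta\}$ is collapsible. Let $(\sigma, \tau) \neq (\alpha, \beta)$ be another valid free pair in $K$, and put $K' := K \setminus \{\sigma, \tau\}$. When $\{\alpha, \beta\} \cap \{\sigma, \tau\} = \emptyset$, I would verify that the two elementary collapses commute: removing an unrelated simplex in a $2$-complex preserves the uniqueness of any other free face's maximal cofacet, so both $K_1$ and $K'$ collapse to $K'' := K \setminus \{\alpha, \beta, \sigma, \tau\}$, and the induction hypothesis applied to $K_1$ yields collapsibility of $K''$, hence of $K'$. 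When the pairs overlap, the freeness hypothesis combined with the dimension bound $\dim \tau, \dim \beta \le 2$ forces $\tau = \beta$, meaning that $\sigma$ and $\alpha$ are two distinct free faces of a common maximal face $\tau$. If $\tau$ is an edge, collapsibility of $K_1$ (which would then contain $\sigma$ as an isolated vertex) forces $K = \{\sigma, \alpha, \tau\}$, so $K' = \{\alpha\}$ is a point. If $\tau$ is a triangle, then $\sigma$ and $\alpha$ are two of its edges sharing exactly one vertex $v$, and both $K_1 = K^\dagger \cup \{\sigma\}$ and $K' = K^\dagger \cup \{\alpha\}$ extend the common subcomplex $K^\dagger := K \setminus \{\sigma, \alpha, \tau\}$ by a maximal edge.

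The main obstacle is the triangle subcase. I would transfer a collapsing sequence of $K_1$ to one of $K'$ by swapping the role of $\sigma$ with that of $\alpha$; this translation is valid for those elementary collapses of $K_1$ that do not use a vertex of $\sigma$ or $\alpha$ as the free face, but can fail for ``problematic'' pre-$\sigma$ collapses using the endpoint of $\alpha$ opposite to $v$, since in $K'$ that vertex carries the extra edge $\alpha$ and is no longer free. To overcome this I would apply the induction hypothesis once more to rearrange the collapsing sequence of $K_1$ so that $\sigma$ is ultimately collapsed through the shared vertex $v$ and no problematic pre-$\sigma$ collapse is encountered; mirroring this modified sequence on $K'$ then produces a collapsing sequence of $K'$ to a single vertex, establishing that $K'$ is collapsible.
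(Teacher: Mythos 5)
The paper does not actually prove this proposition---it cites it as folklore from the two references---so your attempt has to be judged on its own merits. Your skeleton (single-step confluence by induction on the number of simplices, split into the disjoint-pairs case, the shared maximal edge, and the shared maximal triangle) is the right one, and the first three cases are handled correctly. The gap is in the last subcase, and it is genuine. Write $\sigma = vs$ and $\alpha = va$ for the two edges of the shared triangle $\tau=\beta$, and $K^\dagger := K \setminus \{\sigma,\alpha,\tau\}$, so $K_1 = K^\dagger \cup \{\sigma\}$ is collapsible and you must show $K' = K^\dagger \cup \{\alpha\}$ is collapsible. Since $K_1$ is contractible and $\sigma$ joins $v$ to $s$, the complex $K^\dagger$ is a disjoint union of two contractible pieces $C_v \ni v$ and $C_s \ni s,a$. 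Any collapsing of $K_1$ must at some moment remove the maximal edge $\sigma$, either through $v$ or through $s$. If through $v$, your mirroring works. But if the given collapsing removes $\sigma$ through $s$, all you can extract from it is that $C_s$ collapses to the vertex $s$ and that $C_v$ is collapsible \emph{to some vertex}; to mirror onto $K'$ you need either that $C_v$ collapses to the \emph{specific} vertex $v$, or that $C_s$ collapses to the \emph{specific} vertex $a$. Your proposed fix---``apply the induction hypothesis once more to rearrange the collapsing sequence''---cannot deliver this: the induction hypothesis (confluence/greediness for smaller complexes) only guarantees that every maximal sequence of collapses ends at \emph{some} single vertex; it gives no control whatsoever over \emph{which} vertex. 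In particular it does not exclude the a priori possible scenario that every collapsing of $C_v$ is at some stage forced to use $v$ itself as a free face, which is exactly what would break the mirroring. So the crucial rearrangement is asserted, not proved.

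What is missing is the auxiliary lemma: \emph{a collapsible $2$-complex collapses to any prescribed vertex}. This is true, but it needs its own argument, independent of the confluence induction; for instance: in any collapsing of a $2$-complex, the (vertex, edge) collapses never affect whether an edge is a free face of a triangle, so all (edge, triangle) collapses of a given collapsing can be performed first; for a collapsible complex this removes every triangle and leaves a contractible graph containing \emph{every} vertex, i.e.\ a spanning tree, and a tree collapses to any of its vertices. With this lemma the bad case closes: $C_v$, being collapsible, collapses to $v$; then in $K'$ you collapse $C_v$ to $v$, remove $(v,\alpha)$, and finally collapse $C_s$ (which collapses to $s$, hence is collapsible). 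So your plan is completable, but without this lemma (or an equivalent substitute) the hard case---which is precisely where $2$-dimensionality is essential, since the statement fails for $3$-complexes---remains open.
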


For PL collapsibility we can essentially deduce the same conclusion as for
collapsibility as soon as we observe that PL collapsibility of a $2$-complex
does not depend on the choice of the subdivision, which also might be a
folklore result.

\begin{lemma}\label{lem:collapsible_subdivision}
  Let $K$ be a simplicial complex of dimension at most $2$ and
  $K^\prime$ be a subdivision of $K$.
  Then $K$ is collapsible if and only if $K^\prime$ is collapsible.
\end{lemma}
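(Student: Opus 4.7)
The plan is to prove both directions by induction on the number of simplices of $K$, using Proposition~\ref{p:greedy_collapses} together with a single geometric observation. The observation is: if $\tau$ is a $2$-simplex of $K$ and $\sigma$ is an edge of $\partial\tau$ that is free in $K$, then the subdivision $D$ of $\tau$ in $K'$ (a triangulated $2$-disk, whose boundary contains the subdivision $S$ of $\sigma$ as a proper subarc) collapses to $\partial D \setminus \interior(S)$, i.e., to the subdivisions of the two remaining edges of $\partial\tau$. All elementary collapses used are valid in the full $K'$: a free face in the interior of $D$ only has cofaces inside $D$, and a free face on $S$ only has cofaces inside $\tau$ because $\sigma$ is free in $K$. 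The analogous observation when $\tau$ is an edge and $\sigma$ a free vertex of $K$ with $\tau$ maximal reduces to collapsing a subdivided interval from one endpoint.

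For the forward direction, given a free face $(\sigma,\tau)$ of $K$, set $K_1 := K\setminus\{\sigma,\tau\}$. By Proposition~\ref{p:greedy_collapses} the complex $K_1$ is collapsible, so the inductive hypothesis applied to $K_1$ gives that the induced subdivision $K_1'$ is collapsible. The geometric observation shows that $K'$ collapses to $K_1'$, and hence $K'$ is collapsible.

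For the backward direction, the key step is to extract a free face of $K$ from any free face $(\sigma',\tau')$ of $K'$. Let $\eta$ be the smallest simplex of $K$ containing $\sigma'$. If $\sigma'$ is an edge and $\eta$ is a $2$-simplex, then $\sigma'$ is an interior edge of a triangulated $2$-disk and thus has two triangle cofaces in $K'$, contradicting freeness; so $\eta$ is an edge $\sigma$ of $K$, and freeness of $\sigma'$ forces $\sigma$ to be contained in a unique triangle $\tau$ of $K$, i.e., $\sigma$ is free in $K$. If $\sigma'$ is a vertex, a similar count (an interior vertex of a subdivided edge or triangle of $K$ has at least two edge cofaces in $K'$; a vertex of $K$ incident to a triangle of $K$ or to two distinct edges of $K$ likewise has at least two edge cofaces in $K'$) forces $\sigma'$ itself to be a vertex of $K$, incident to a unique maximal edge $\tau$ of $K$ and to no triangle. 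In either case one reads off a free face $(\sigma,\tau)$ of $K$; the geometric observation then shows that $K'$ collapses to $K_1'$ for $K_1 := K\setminus\{\sigma,\tau\}$; Proposition~\ref{p:greedy_collapses} then yields that $K_1'$ is collapsible, induction gives that $K_1$ is collapsible, and hence $K$ itself is collapsible.

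The main obstacle is this case analysis in the backward direction: it relies crucially on being in dimension at most $2$, since the rigidity facts used --- that every edge in the interior of a triangulated $2$-disk has exactly two triangle cofaces, and every interior vertex has at least three edge cofaces --- are precisely what rules out \emph{spurious} free faces of $K'$ not coming from free faces of $K$. This rigidity does not persist in higher dimensions, consistent with the fact, emphasized in Section~\ref{s:prelim}, that PL collapsibility and collapsibility genuinely differ in general.
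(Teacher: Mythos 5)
Your proposal is correct and takes essentially the same route as the paper: induction on the number of simplices of $K$ in both directions, combined with Proposition~\ref{p:greedy_collapses} and a collapse of each subdivided triangle onto the subdivision of its two non-free boundary edges, which is exactly the paper's Observation~\ref{obs:collapsing_of_triangle}; your case analysis extracting a free face of $K$ from a free face of $K'$ (ruling out spurious free faces interior to subdivided edges and triangles) is a fleshed-out version of the step the paper states tersely. The one ingredient you assert without proof is that geometric observation itself --- the paper proves it by greedily collapsing through free edges not in the target subcomplex and using connectivity of the dual graph of the subdivided triangle, then repeating the argument for the remaining tree --- so a complete write-up should include that short argument.
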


In the proof of the lemma we use the following observation.

\begin{observation}\label{obs:collapsing_of_triangle}
   Let $\tau$ be a triangle with vertices $a,b,c$. Let $K'$ be an arbitrary
   subdivision of $\tau$.
   Then $K'$ collapses to the subcomplex $V'$ formed by the subdivision of the edges $ab$ and $bc$.
\end{observation}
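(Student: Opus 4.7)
The plan is to prove the observation by induction on the number $m$ of triangles in $K^\prime$. For the base case $m=1$ we have $K^\prime = \{a,b,c,ab,bc,ac,abc\}$, and the single elementary collapse $(ac,abc)$ reduces $K^\prime$ to $V^\prime = \{a,b,c,ab,bc\}$.

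For the inductive step ($m \geq 2$), observe that the segment $ac$ is subdivided by $K^\prime$ into edges $e_1,\ldots,e_k$ lying on $\partial K^\prime$, each of which is a free face of its unique containing triangle $\tau_i$. The key geometric observation is that whenever $1 < i < k$, the triangle $\tau_i$ has no other edge on $\partial K^\prime$: the two boundary edges of $K^\prime$ incident to an interior subdivision vertex of $ac$ both lie on the segment $ac$, so a second boundary edge of $\tau_i$ would force $\tau_i$ to have two distinct edges on the line $ac$ and thus to be degenerate. Consequently, whenever such an $i$ exists, the elementary collapse $(e_i,\tau_i)$ yields a complex whose polyhedron is still a $2$-disk on which $V^\prime$ remains a proper connected sub-arc of the boundary, and the (slightly generalized) inductive hypothesis applies directly.

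The remaining cases are $k \in \{1,2\}$ with every $\tau_i$ being an ``ear'' at a corner $a$ or $c$; by the same degeneracy argument the second boundary edge of such an ear must lie on $ab$ or $bc$ (hence in $V^\prime$). These cases can be dealt with directly: one performs the collapse $(e_i,\tau_i)$, which leaves a pendant $1$-simplex inside $V^\prime$ together with a newly exposed interior edge of $\tau_i$ that becomes a free face of the adjacent triangle on the other side; this adjacent triangle is collapsed next, and continuing in this fashion reduces to a strictly smaller complex to which induction applies. The main obstacle I expect is the formulation of the inductive hypothesis in sufficient generality---allowing the ambient complex to be a $2$-disk with some pendant edges attached along the retained arc $V^\prime$---so that the corner ``ear'' cases fit seamlessly into the induction; once this generalization is in place, the argument becomes a routine bookkeeping exercise.
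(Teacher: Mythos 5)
There is a genuine gap in your inductive step. Your key geometric observation is correct as stated---for $1<i<k$ the triangle $\tau_i$ has no \emph{edge} on $\partial K'$ other than $e_i$---but the conclusion you draw from it is false: nothing prevents the third \emph{vertex} of $\tau_i$ from lying on $\partial K'$, and in that case the collapse $(e_i,\tau_i)$ pinches the disk at that vertex rather than leaving a disk. Concretely, subdivide $\tau=abc$ as a fan from $b$: place $p,q$ in the interior of the segment $ac$ and take the triangles $apb$, $pqb$, $qcb$. Then $k=3$, the only interior choice is $e_2=pq$ with $\tau_2=pqb$, and collapsing $(pq,pqb)$ leaves the two triangles $apb$ and $qcb$, which meet only in the vertex $b$. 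This is a wedge of two disks at a point of $V'$, not a $2$-disk, and also not ``a $2$-disk with pendant edges attached along $V'$''; so even your generalized inductive hypothesis fails after the very first elementary collapse. The same pinching occurs in your $k\in\{1,2\}$ ear cases (e.g.\ the subdivision with triangles $apv$, $pcv$, $abv$, where $v$ is the midpoint of $bc$: collapsing $(ap,apv)$ pinches at $v$); note also that an ear need not have a second boundary edge at all, contrary to what your case analysis presumes. Repairing the induction would require an invariant closed under such pinches---roughly, a tree of disks glued at vertices of the subdivided $ab\cup bc$, with pendant trees attached---at which point the bookkeeping is no longer routine.

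The paper avoids any structural invariant of this kind. It collapses greedily through free edges not in $V'$ and then argues globally about the terminal complex $L'$: first, $L'$ contains no triangle, because every edge of a triangle of $L'$ is either in $V'$ or shared by two triangles of $L'$, so connectivity of the dual graph of $K'$ would force \emph{all} triangles of $K'$ to survive, which is impossible since the subdivided $ac$ provides a free edge to start with; second, since collapses preserve the homotopy type, the surviving $1$-complex is a tree, and a second greedy round of collapses through degree-one vertices removes everything outside $V'$. If you want to salvage your write-up, replacing the disk-preservation invariant by this greedy/global argument is the shortest route.
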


\begin{proof}
We greedily perform collapses through free edges of $K'$ which are not in $V'$.
  Let $L'$ be the resulting complex. We observe that $L'$ contains no triangle.
  Indeed, every edge contained in some triangle of $L'$ is either an edge of
  $V'$ or it has to be contained in both neighboring triangles (otherwise we
  could continue with collapses). This means, because the dual graph of $K'$ is
  connected, that once there is a single triangle of $K'$ in $L'$, then $L'$
  contains all triangles of $K'$ which is a contradiction.

Thus, $L'$ contains no triangles and it has the same homotopy type as $K'$.
  That means that $L'$ is a tree. Now we greedily perform collapses of edges
  not in $V'$ through vertices of degree $1$. By essentially the same argument
  as above, only the edges of $V'$ remain (otherwise, we would find a cycle in
  $L'$).
\end{proof}

\begin{proof}[Proof of Lemma~\ref{lem:collapsible_subdivision}]
First, we show that if $K$ is collapsible, then $K'$ is collapsible by induction
    on the number of simplices of $K$ (the case of one vertex is trivial).
    Assume that $K_1$ arises from $K$ by the first elementary collapse in some collapsing of $K$.
    First, assume that it removes an edge $ac$ and a triangle $abc$. Perform the
    collapses from Observation~\ref{obs:collapsing_of_triangle} on $K'$
    obtaining a complex $K_1'$. Then $K_1'$ is a subdivision of $K_1$. Thus, it
    collapses by induction. The other option is that the first elementary
    collapse removes some vertex $a$ and some edge $ab$. Then we obtain a
    subdivision $K_1'$ of $K_1$ by collapses on $K'$ removing $a$ and the
    subdivided edge $ab$ in direction from $a$ towards $b$.

  Now, we show that if $K'$ is collapsible, then $K$ is collapsible again by
    induction on the number of simplices of $K$. Assume that $K'$ is
    collapsible. This implies that $K'$ contains a free
    face $\sigma'$ (a vertex or an edge) which subdivides a face $\sigma$ of
    $K$ which again has to be free. We perform a collapse on $K$ through
    $\sigma$ obtaining $K_1$. As in the previous paragraph, we also 
    collapse $K'$ to a subdivision $K_1'$ of $K_1$. By
    Proposition~\ref{p:greedy_collapses} we get that $K_1'$ is collapsible.
    Therefore, $K_1$ is collapsible by induction which also implies that $K$ is
    collapsible.
  \end{proof}

  \begin{proof}[Proof of Proposition~\ref{p:polynomial}]
    Let $K$ be the input triangulation of $P$. By definition, $\plgcat(P) = 1$
    if and only if $P$ is PL collapsible which occurs if and only if some
    subdivision $K'$ of $K$ is collapsible. By
    Lemma~\ref{lem:collapsible_subdivision}, it is sufficient to check whether
    $K$ is collapsible. 
    This can be done in polynomial time due to Proposition~\ref{p:greedy_collapses}.
  \end{proof}

\section{NP hardness of PL geometric category 2}
\label{s:NP-hardness}

  In this section we prove Theorem~\ref{t:main}. As we have sketched in the
  introduction, in our construction we need to attach a torus to every triangle
  of a certain intermediate complex. We start with the details regarding this
  attachment.
  
  \subsection{Attaching tori}
  \label{ss:attaching_tori}

First, let us us consider the standard torus $T = S^1 \times S^1$. 
An important curve in $T$ is the \emph{longitude} $\lambda = S^1 \times
\{\cdot\}$ where `$\cdot$' stands for some fixed point in $S^1$.

  \begin{definition}[Enriched complex $K^+$]\label{def:enriched_complex}
  Given a simplicial complex $K$, we define the \emph{enriched} complex $K^+$ as
  follows. For each triangle $\tau \in K$ we consider a copy $T_\tau$ of the
  standard torus with longitude $\lambda_\tau$ triangulated as in
  Figure~\ref{f:attached_torus}. We get $K^+$ as a result of gluing all tori $T_\tau$ to
  $K$ so that we identify $\lambda_\tau$ with $\partial\tau$. In the sequel, we
  consider $K$ as well as all the tori $T_\tau$ as subcomplexes of $K^+$.
\end{definition}

Note that the enriched complex $K^+$ can be constructed in polynomial time in the size of $K$.

\begin{figure}
  \begin{center}
    \includegraphics{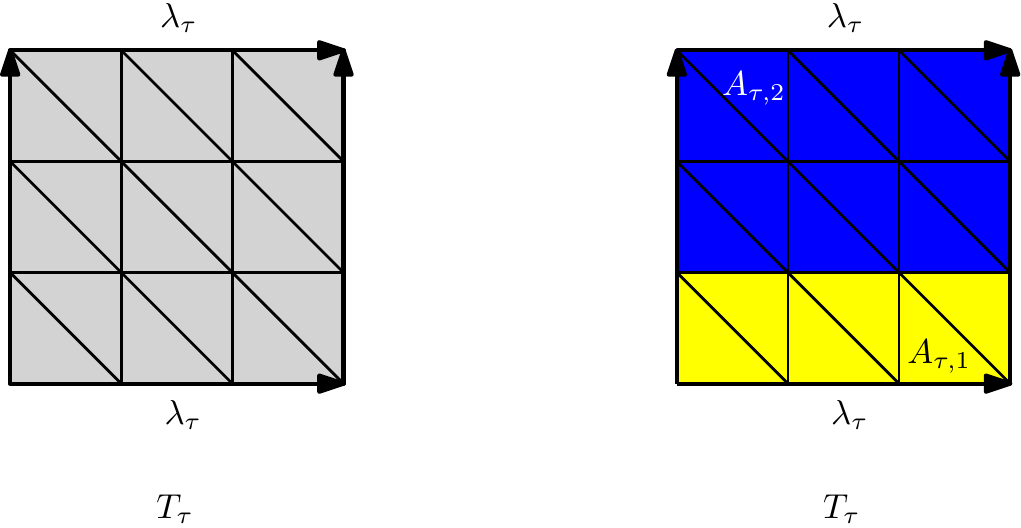}
    \caption{Left: The torus $T_\tau$ with longitude $\lambda_\tau$. Opposite
    edges are identified as usual. Right: Splitting $T_\tau$ to two annuli.}
    \label{f:attached_torus}
  \end{center}
\end{figure}

\begin{observation}\label{obs:attached_torus_plgcat}
  If $K$ admits a covering by two collapsible subcomplexes $K_1, K_2$
  such that both $K_1$ and $K_2$ contain the whole 1-skeleton of $K$ then
  $K^+$ can be also covered by two collapsible subcomplexes.
\end{observation}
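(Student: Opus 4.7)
The plan is to exploit the decomposition of each attached torus $T_\tau$ into two annuli $A_\tau$ and $B_\tau$ shown on the right of Figure~\ref{f:attached_torus}: these two annuli share exactly two disjoint parallel circles, namely the longitude $\lambda_\tau$ (which, after the gluing, equals $\partial\tau$) and a second circle $\mu_\tau$ lying in the interior of $T_\tau$.

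Concretely, I would set
\[
  K_1^+ := K_1 \,\cup\, \bigcup_{\tau \in K} A_\tau, \qquad K_2^+ := K_2 \,\cup\, \bigcup_{\tau \in K} B_\tau.
\]
Because $K_1$ and $K_2$ each contain the whole $1$-skeleton of $K$, they contain every $\partial\tau = \lambda_\tau$, so the gluings are consistent and $K_1^+, K_2^+$ are honest subcomplexes of $K^+$. The identity $K_1^+ \cup K_2^+ = K^+$ follows at once from $K_1 \cup K_2 = K$ and $A_\tau \cup B_\tau = T_\tau$.

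The bulk of the argument is to show that $K_1^+$ is collapsible; the argument for $K_2^+$ is symmetric. The key local observation is that within $K_1^+$ every edge of $A_\tau$ not lying on $\lambda_\tau$ is contained only in triangles of $A_\tau$: the edges of $\mu_\tau$ are shared with $B_\tau$, but $B_\tau \not\subseteq K_1^+$; and the remaining non-$\lambda_\tau$ edges of $A_\tau$ sit in the interior of $T_\tau$, so they cannot meet any other annulus $A_{\tau'}$ or any part of $K_1$. Using the standard fact that a triangulated annulus collapses onto either of its two boundary circles, I would, for each $\tau$ separately, collapse $A_\tau$ onto $\lambda_\tau$ starting from the free edges on $\mu_\tau$ and peeling triangles inward. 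After performing these collapses for all $\tau$, the complex $K_1^+$ has been collapsed onto $K_1$, which is collapsible by hypothesis, so $K_1^+$ is collapsible as well.

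The only step that goes beyond bookkeeping is the collapse of a triangulated annulus onto one of its boundary circles; for the triangulation visible in Figure~\ref{f:attached_torus} this is transparent---one peels off the triangles incident to $\mu_\tau$, which uncovers new free diagonal edges, and iterates---so I do not anticipate any genuine obstacle. If one wishes to avoid an explicit peeling argument, one can instead invoke Proposition~\ref{p:greedy_collapses} to justify that the greedy sequence of elementary collapses on $A_\tau$ succeeds.
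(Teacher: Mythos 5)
Your proof is correct and follows essentially the same route as the paper: split each torus $T_\tau$ into two annuli sharing $\lambda_\tau$, attach one family of annuli to $K_1$ and the other to $K_2$, and collapse each annulus onto $\lambda_\tau$ (using that the $1$-skeleton, hence each $\partial\tau$, lies in both $K_i$) to reduce to the collapsibility of $K_i$. One small caveat: your optional appeal to Proposition~\ref{p:greedy_collapses} is misplaced, since that proposition concerns complexes that are collapsible to a point, whereas collapsing the annulus $A_\tau$ onto its boundary circle is justified by your explicit peeling argument, which suffices.
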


\begin{proof}
  Split each $T_\tau$ to two annuli $A_{\tau,1}$ and $A_{\tau,2}$ as in
  Figure~\ref{f:attached_torus}. (Both of them are subcomplexes of $T_\tau$ and they
  share $\lambda_\tau$ on one of their boundaries.) Take $K_i^+$ as the union
  of $K_i$ and all annuli $A_{\tau,i}$ for $i \in \{1,2\}$. Then $K_1^+$ and $K_2^+$
  cover $K^+$. In addition, they are both collapsible because $K_i^+$ collapses
  to $K_i$ as each $A_{\tau,i}$ collapses to $\lambda_\tau$.
\end{proof}

We continue with the main technical lemma for our reduction.

\begin{lemma}\label{lem:attached_torus}
 Let $P$ be a polyhedron which is a union of two subpolyhedra $R$ and $T$.
  Assume that $T = S^1 \times S^1$ is the torus and assume that $R$ and $T$
  intersect exactly in the longitude $\lambda = S^1 \times \{\cdot\}$ of $T$.
  Assume that $P$ can be covered by two contractible subpolyhedra $Q_1, Q_2$.
  Then $\lambda \subseteq Q_1, Q_2$ and $\lambda$ is nullhomologous in $R \cap
  Q_1$ as well as in $R \cap Q_2$.
\end{lemma}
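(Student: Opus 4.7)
The plan is to establish the two conclusions in order: first that $\lambda \subseteq Q_1$ and $\lambda \subseteq Q_2$, and then that $[\lambda]$ vanishes in $H_1(R \cap Q_i;\Z_2)$. All homology is taken with $\Z_2$-coefficients. The main tools are Mayer--Vietoris applied to several natural decompositions, together with Alexander duality in the closed orientable surface $T$.

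Assuming we already know $\lambda \subseteq Q_i$, the second conclusion is easy. Then $(R \cap Q_i) \cap (T \cap Q_i) = R \cap T \cap Q_i = \lambda$, so Mayer--Vietoris for $Q_i = (R \cap Q_i) \cup (T \cap Q_i)$ combined with $H_1(Q_i) = 0$ yields that the map
\[
H_1(\lambda) \to H_1(R \cap Q_i) \oplus H_1(T \cap Q_i), \qquad [\lambda] \mapsto \bigl([\lambda]_{R \cap Q_i},\, [\lambda]_{T \cap Q_i}\bigr),
\]
is surjective. Since $\lambda \subseteq T \cap Q_i$ and $[\lambda]_T \neq 0$, the second coordinate is nonzero; surjectivity from the one-dimensional source $H_1(\lambda) = \Z_2$ then forces the first coordinate to vanish, giving $[\lambda]_{R \cap Q_i} = 0$.

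For the main step I would argue by contradiction: suppose $\lambda \not\subseteq Q_1$. Then $\lambda \cap Q_1$ is a proper closed subset of the circle $\lambda$, a disjoint union of arcs, so $H_1(\lambda \cap Q_1) = 0$. Applying Mayer--Vietoris to $Q_1 = (R \cap Q_1) \cup V_1$ with $V_1 := T \cap Q_1$ and using $H_1(Q_1) = 0$ forces $H_1(V_1) = 0$. By Alexander duality in the closed orientable $2$-manifold $T$ (together with universal coefficients over the field $\Z_2$), $H_1(T, T \setminus V_1) \cong H^1(V_1) = 0$. The long exact sequence of the pair $(T, T \setminus V_1)$ then forces the inclusion-induced map $H_1(T \setminus V_1) \to H_1(T) = \Z_2^2$ to be surjective. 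Since $V_1 \cup V_2 = T$ with $V_2 := T \cap Q_2$, we have $T \setminus V_1 \subseteq V_2$, so this surjection factors as $H_1(T \setminus V_1) \to H_1(V_2) \to H_1(T)$, making $H_1(V_2) \to H_1(T)$ surjective. To finish, I would show that the image of this map has dimension at most $1$: if $\lambda \not\subseteq Q_2$, the same Mayer--Vietoris argument as for $V_1$ gives $H_1(V_2) = 0$; if $\lambda \subseteq Q_2$, then Mayer--Vietoris for $Q_2$ with intersection $\lambda$, together with the fact that $[\lambda]_{V_2}$ is nonzero (it maps to $[\lambda]_T \neq 0$), forces $H_1(V_2) = \Z_2 = \langle [\lambda]_{V_2}\rangle$, whose image in $H_1(T)$ is the one-dimensional subspace $\langle [\lambda]_T\rangle$. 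Either way, the image of $H_1(V_2) \to H_1(T)$ cannot be all of $\Z_2^2$, a contradiction. The symmetric argument gives $\lambda \subseteq Q_2$.

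I expect the Alexander-duality step to be the main technical point, since it is what converts the vanishing of $H_*(V_1)$ into a surjectivity statement about the complement $T \setminus V_1$ that can then be contrasted with the severely constrained image of $H_1(V_2) \to H_1(T)$; everything else is routine Mayer--Vietoris bookkeeping.
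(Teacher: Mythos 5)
Your proof is correct and takes essentially the same route as the paper's: the same decomposition of each $Q_i$ into $T \cap Q_i$ and $R \cap Q_i$, the same Mayer--Vietoris bookkeeping that forces $\dim H_1(T \cap Q_i) \leq 1$ and yields the nullhomologous conclusion, and the same duality-plus-functoriality argument showing that triviality of $H_1(T \cap Q_1)$ would make $H_1(T \cap Q_2) \to H_1(T) \cong \Z_2^2$ surjective, which is the contradiction. The only substantive difference is technical: you invoke Alexander duality directly for the compact set $V_1 \subseteq T$ (which in general requires \v{C}ech cohomology, harmless here since $V_1$ is a compact polyhedron, hence an ANR), whereas the paper sidesteps this point by passing to a regular neighborhood $N_1$ of $A_1$ and applying Lefschetz duality together with excision.
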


\begin{proof}
Let $A_i := T \cap Q_i$ for $i = 1,2$. The lemma is implied by the following
  two claims where all the homology is considered with $\Z_2$ coefficients. 

\begin{claim}
\label{c:dim2}
\leavevmode
  \begin{enumerate}[(i)]
    \item If $H_1(A_1) = 0$, then $\dim H_1(A_2) \geq 2$.
    \item If $H_1(A_2) = 0$, then $\dim H_1(A_1) \geq 2$.
  \end{enumerate}
\end{claim}

  \begin{claim}
\label{c:dim1}
\leavevmode
    \begin{enumerate}[(i)]
      \item If $H_1(A_1) \neq 0$, then $\dim H_1(A_1) = 1$, $\lambda$ belongs
	to $Q_1$ and $\lambda$ is nullhomologous in $R \cap Q_1$.
      \item If $H_1(A_2) \neq 0$, then $\dim H_1(A_2) = 1$, $\lambda$ belongs
	to $Q_2$ and $\lambda$ is nullhomologous in $R \cap Q_2$.
    \end{enumerate}
  \end{claim}

Indeed, the conjunction of the claims implies that only option is that $\dim
  H_1(A_1) = \dim H_1(A_2) = 1$ and thus we can use the conclusions of
  Claim~\ref{c:dim1}. Therefore, it remains to prove the claims. In each of the
  claims, we only prove the first item as the other one is symmetric.

  \begin{proof}[Proof of Claim~\ref{c:dim2}(i)]
Let $N_1$ be the regular neighborhood%
\footnote{In this case $N_1$ is a
  	$2$-manifold with boundary inside $T$ which collapses to $A_1$. For a general
 	 definition of regular neighborhood see~\cite[Chapter 3]{rourke-sanderson82}.}
	of $A_1$ inside $T$, which is homotopy equivalent to $A_1$; see
	Figure~\ref{f:regular_neighborhood}.
	Then $N_1$ is a surface with boundary. Thus we may apply the Lefschetz duality%
    \footnote{Lefschetz duality (see e.g. Theorem 3.43 in \cite{hatcher02})
    over $\Z_2$:
    Let $M$ be an $n$-dimensional compact manifold
    with boundary $N$. 
    Then $H_i(M,N; \Z_2) \cong H^{n-i}(M; \Z_2)$ for every $i$.}
    obtaining
    \begin{equation}
      \label{e:lefschetz}
      H_1(N_1, \partial N_1) \cong H^1(N_1) \cong H_1(N_1) \cong H_1(A_1) = 0 
    \end{equation}
    where the second isomorphism follows from the fact that the homology and
    the cohomology groups are isomorphic over a field.

\begin{figure}
	\begin{center}
	  \includegraphics[page=3]{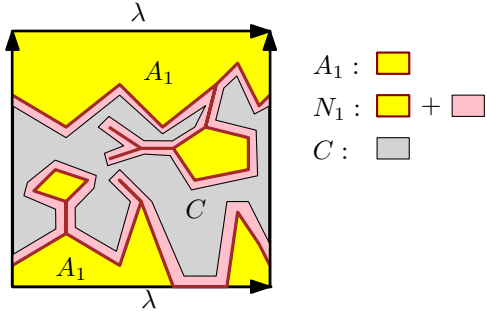}
	  \caption{$A_1$, $N_1$ and $C$ inside $T$.}
	  \label{f:regular_neighborhood}
	\end{center}
\end{figure}

    Now let $C$ be the closure of the complement of $N_1$ in $T$, that is,
    $C := \overline{T \setminus N_1}$. By the excision property of homology,
    and then by~\eqref{e:lefschetz}
    \begin{equation}
    \label{e:excision}
      H_1(T, C) \cong H_1(N_1, \partial N_1) = 0.
    \end{equation}
 
  Finally, we consider the long exact sequence of the pair:


\[
  \cdots \to H_1(C) \xrightarrow[]{i_*} H_1(T) \to H_1(T, C) \to \cdots
\]

 The map $i_*$ is induced by the inclusion $i\colon C \to T$. Because
    of~\eqref{e:excision}, the map $i_*$ is surjective. The inclusion $i$ can
    be decomposed into inclusions $j \colon C \to A_2$ and $k \colon A_2 \to
    T$. (Note that $C \subseteq A_2$ as $A_1$ and $A_2$ cover $T$.) By functoriality of homology, $k_* \colon H_1(A_2) \to H_1(T)$ must be
    surjective as well. Therefore, $\dim H_1(A_2) \geq \dim H_1(T) = 2$.
  \end{proof}

  \begin{proof}[Proof of Claim~\ref{c:dim1}(i)]
  Let 
  $R_1 := R \cap Q_1$. Consider the Mayer-Vietoris exact
  sequence:

  $$
  \cdots \to H_1(A_1 \cap R_1) \xrightarrow[]f H_1(A_1) \oplus H_1(R_1)
  \xrightarrow[]g H_1(Q_1) \to
  \cdots
  $$

  As we assume that $Q_1$ is contractible, we get $H_1(Q_1) = 0$. Therefore, $f$
    is surjective (from exactness). As~we also
  assume that $H_1(A_1) \neq 0$, there is a nonzero vector $v = (z,0) \in H_1(A_1)
  \oplus H_1(R_1)$. We know $v \in \im f$ as $f$ is surjective. In
  particular, $H_1(A_1 \cap R_1) \neq 0$. On the other hand, $A_1 \cap R_1
  \subseteq T \cap R = \lambda$. Therefore, $A_1 \cap R_1 = \lambda$.
  This gives $\lambda \subseteq Q_1$ as we need.
  Using that $f$ is surjective
    again, we get $\dim H_1(A_1) + \dim H_1(R_1) \leq \dim H_1(A_1 \cap R_1) = 1$. Because
    $H_1(A_1) \neq 0$ we actually get $\dim H_1(A_1) = 1$ and $\dim H_1(R_1) =
    0$. This gives that $\lambda$ is nullhomologous in $R_1 = R \cap Q_1$. 
\end{proof}

\end{proof}

  \subsection{Construction form~\cite{goaoc-patak-patakova-tancer-wagner19}}
  \label{ss:cons}

As we sketched in the introduction, we use the construction
from~\cite{goaoc-patak-patakova-tancer-wagner19} as an intermediate step. Given
that this construction is somewhat elaborated, we prefer to state it as a
blackbox only mentioning the properties that we need in our reduction.

The NP-hardness in~\cite{goaoc-patak-patakova-tancer-wagner19} is proved by a
reduction from the classical $3$-satisfiability problem
which is defined in Section~\ref{s:prelim}.

  \begin{proposition}[\cite{goaoc-patak-patakova-tancer-wagner19}]\label{p:construction}
  There is a polynomial time algorithm that produces from a given $3$-CNF
  formula $\phi$ (with $n$ variables) a pure $2$-dimensional complex $K_\phi$ with the following
  properties.

  \begin{enumerate}[(i)]
    \item  	
      $K_\phi$ contains pairwise disjoint triangulated $2$-spheres $S_1,
      \dots, S_n$, one for each variable.
    \item
      The second homology group, $H_2(K_\phi)$, is generated by the spheres
      $S_1, \dots, S_n$. In particular, $H_2(K_\phi) \cong \Z_2^n$ and
      no triangle outside the spheres $S_1, \dots, S_n$ is contained in
      a $2$-cycle.
    \item
      If $\phi$ is satisfiable, then there are triangles $\tau_i$ in $S_i$
      for every $i \in [n]$ such that $K_\phi$ becomes collapsible after
      removing these triangles. In addition, for every $i \in [n]$, there are
      at least two options how to pick $\tau_i$ in $S_i$. (Such a choice can be
      done independently in each $S_i$ yielding at least $2^n$ collapsible
      subcomplexes.)
    \item
      If an arbitrary subdivision of $K_\phi$ becomes collapsible after
      removing some $n$ triangles, then $\phi$ is satisfiable.
  \end{enumerate}

\end{proposition}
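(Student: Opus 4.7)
The plan is to follow the standard gadget paradigm of reductions from $3$-satisfiability. For each variable $x_i$ I would build a \emph{variable gadget} consisting of a triangulated $2$-sphere $S_i$ with two marked triangles $\tau_i^{\mathrm{T}}$ and $\tau_i^{\mathrm{F}}$, representing the two truth values of $x_i$. For each clause $c_j = (\ell_{j,1} \vee \ell_{j,2} \vee \ell_{j,3})$, build a \emph{clause gadget} $G_j$: a contractible $2$-dimensional patch glued to the spheres $S_i$ of the three variables occurring in $c_j$, along $1$-dimensional curves that terminate at the marked triangle corresponding to the satisfying literal of $c_j$. All gluings take place on $1$-skeleta and are kept away from the marked triangles of the other variables, so the $S_i$ remain pairwise disjoint as subcomplexes, giving (i). Item (ii) should follow by induction on the number of attached clause gadgets via Mayer--Vietoris: each $G_j$ is contractible and glued along a $1$-complex, so it neither creates new classes in $H_2$ nor merges the existing classes carried by the $S_i$. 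By the same token, the $G_j$ can be designed as collapsible, ``dendritic'' patches in which every triangle outside the spheres has a boundary edge lying in no other $2$-face, preventing any $2$-cycle off the spheres.

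For (iii), given a satisfying assignment I would select $\tau_i \in \{\tau_i^{\mathrm{T}}, \tau_i^{\mathrm{F}}\}$ according to the truth value of $x_i$. Removing these $n$ triangles opens each $S_i$ into a disk. For every clause $c_j$, at least one literal is satisfied, so at least one of the three ``doors'' opened in the $S_i$'s sits at the interface with $G_j$; this produces a free edge from which $G_j$ collapses sequentially, after which the disks $S_i \setminus \tau_i$ finish the collapse of $K_\phi$. A small degree of redundancy in the variable gadget, e.g.\ each $\tau_i^{\mathrm{T}}$ and $\tau_i^{\mathrm{F}}$ being part of a pair of adjacent triangles either of which could be removed without affecting the rest of the collapse, yields the promised two choices per $\tau_i$.

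The main obstacle is (iv), both because the $n$ removed triangles live in an \emph{arbitrary} subdivision $K'$ of $K_\phi$ and because one must reconstruct a complete satisfying assignment from collapsibility alone. The homological half is clean: subdivision preserves $H_2$, so by (ii) we have $\dim H_2(K') = n$, and since a collapsible complex has vanishing $H_2$ each removed triangle must kill one independent $2$-cycle; using (ii) once more this forces exactly one removed triangle to lie in each subdivided sphere $S_i'$. The delicate part is reading off a truth value for each $x_i$: the triangle removed from $S_i'$ must be shown to sit, unambiguously, on the $\tau_i^{\mathrm{T}}$-side or on the $\tau_i^{\mathrm{F}}$-side of the subdivided sphere, and collapsibility of the subdivided clause gadgets $G_j'$ must then force that at least one of the three doors opened this way feeds the corresponding gadget, i.e.\ clause $c_j$ is satisfied by the assignment. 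Making this side-readout robust under refinement is where the bulk of the technical work lies; the argument would combine Lemma~\ref{lem:collapsible_subdivision}-type subdivision-invariance with a careful combinatorial analysis of how collapses propagate through the dual graph of $S_i'$ minus the removed triangle and into the subdivided clause gadgets.
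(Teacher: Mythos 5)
There is a genuine gap, and it sits exactly where the real difficulty of the statement lies. The paper does not prove Proposition~\ref{p:construction} by building a new complex: $K_\phi$ \emph{is} the complex of~\cite{goaoc-patak-patakova-tancer-wagner19}, and the paper's proof consists of matching items (i)--(iv) to statements proved there --- (i) to the spheres $S(u)$ of \S 4.3, (iii) and (iv) to Proposition~8(ii)--(iii) together with Proposition~12 (which identifies $n$ with $\tilde\chi(K_\phi)$), and (ii) to Remark~13, which is verified separately in Appendix~\ref{sec:appendix} (Proposition~\ref{p:wedge_spheres}). Your proposal instead tries to design the reduction from scratch, and item (iv) is precisely where it stops being a proof: you state yourself that making the ``side-readout'' of truth values robust under arbitrary subdivision ``is where the bulk of the technical work lies,'' but that work \emph{is} the NP-hardness theorem of~\cite{goaoc-patak-patakova-tancer-wagner19} --- an intricate, many-page gadget construction. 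No concrete clause gadget is specified, and no argument is given that collapsibility can only be initiated from a ``door'' corresponding to a satisfied literal rather than from free faces the patch already possesses, from interactions between different clause gadgets, or from exotic free faces created by the subdivision. (Your homological observation that exactly one triangle must be removed from each subdivided sphere is fine; everything after it is a declaration of intent, not an argument.)

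There is also a mathematical error in your argument for (ii). Gluing a contractible patch $G_j$ to a complex $K_1$ along a $1$-dimensional subcomplex does \emph{not} in general leave $H_2$ unchanged: in the Mayer--Vietoris sequence
\begin{equation*}
  H_2(K_1)\oplus H_2(G_j)\;\to\; H_2(K_1\cup G_j)\;\to\; H_1(K_1\cap G_j)\;\to\; H_1(K_1)\oplus H_1(G_j),
\end{equation*}
a new class in $H_2(K_1\cup G_j)$ appears whenever a cycle of the $1$-dimensional intersection bounds in both pieces --- for instance, two disks glued along their boundary circle form a $2$-sphere. This is exactly why the paper's appendix relies on Lemma~\ref{l:homotopy}, which requires the intersection $K_1\cap K_2$ itself to be \emph{contractible}, not merely $1$-dimensional. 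For your induction to go through, the attaching curves of the clause gadgets would have to be trees, or you would have to verify injectivity of $H_1(K_1\cap G_j)\to H_1(K_1)\oplus H_1(G_j)$ at every gluing step; as written, the claim that the gadgets ``neither create new classes in $H_2$'' is unjustified and false in general.
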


\begin{proof} The proof of the proposition consists mostly of references
  to~\cite{goaoc-patak-patakova-tancer-wagner19}. However, a few items are not
  as explicitly stated in~\cite{goaoc-patak-patakova-tancer-wagner19} as we
  need them here, thus we explain in detail how all the items of the
  proposition can be deduced from the text
  in~\cite{goaoc-patak-patakova-tancer-wagner19}.

  The construction of $K_\phi$ is given in Section~4
  of~\cite{goaoc-patak-patakova-tancer-wagner19}. The spheres $S_1, \dots, S_n$
  of item~(i) are the spheres $S(u)$ introduced in \S 4.3
  of~\cite{goaoc-patak-patakova-tancer-wagner19}. For checking the other items,
  we first point out that
  \cite[Proposition~12]{goaoc-patak-patakova-tancer-wagner19} states that the
  number of variables, $n$, is equal to the reduced Euler characteristic
  $\tilde \chi(K_\phi)$.

  It is stated in Remark~13
  in~\cite{goaoc-patak-patakova-tancer-wagner19} that $K_\phi$ is homotopy
  equivalent to the wedge of $n$ $2$-spheres; in particular, $\dim H_2(K_\phi) =
  n$. Then item~(ii) immediately
  follows as the disjoint spheres $S_1, \dots, S_n$ generate a subspace of
  dimension $n$ in $H_2(K_\phi)$. Unfortunately, Remark~13 is only a side
  remark in~\cite{goaoc-patak-patakova-tancer-wagner19} and it is not proved
  there. Therefore, we explain in Appendix~\ref{sec:appendix},
  Proposition~\ref{p:wedge_spheres}, how Remark~13
  of~\cite{goaoc-patak-patakova-tancer-wagner19} follows from their tools.

  Item~(iii), using $n = \tilde \chi(K_\phi)$, is the content of Proposition~8(ii)
  in~\cite{goaoc-patak-patakova-tancer-wagner19} with the addendum that it is
  also necessary to check the proof: In the beginning of Section~7
  of~\cite{goaoc-patak-patakova-tancer-wagner19}, it is specified that the
  triangles are removed in certain regions $D[\ell(u)]$. By checking the
  construction of $D[\ell(u)]$ in \S 4.3
  of~\cite{goaoc-patak-patakova-tancer-wagner19}, these regions are in the correct
  spheres ($S_i$ in our notation; $S(u)$ in the notation
  of~\cite{goaoc-patak-patakova-tancer-wagner19}) and in addition there are at
  least two choices of the removed triangle for every $i$ (actually exactly
  three choices).

  Item~(iv), using $n = \tilde \chi(K_\phi)$, is exactly the content of Proposition~8(iii).
\end{proof}

\subsection{The final reduction}

\begin{proof}[Proof of Theorem~\ref{t:main}]
  Given a 3-CNF formula $\phi$ and its corresponding complex $K_{\phi}$
we construct its enriched complex $K^+_{\phi}$.
  (See Definition~\ref{def:enriched_complex}.) Theorem~\ref{t:main} is
  proved by showing that $\phi$ is satisfiable if and only if $\plgcat(K^+_{\phi})
  \leq 2$ as 3-satisfiability is an NP-hard problem.
\begin{enumerate}[(a)]
  \item \emph{$\phi$ is satisfiable $\implies K^+_{\phi}$ can be covered by two collapsible subcomplexes.}

	  Suppose that the formula $\phi$ is satisfiable. Then
      by Proposition~\ref{p:construction}(iii)
      $K_\phi$ is collapsible after removal of $n$ triangles, one from each sphere $S_i$,
      and for each $S_i$ there are at least two options, say $\tau^{(1)}_{i}, \tau^{(2)}_{i}$,
      how to pick such a triangle. Therefore, the subcomplexes    
   \[
	K_1 := K_\phi \setminus \{\tau^{(1)}_{1}, \dots, \tau^{(1)}_n\}, \,
	K_2 := K_\phi \setminus \{\tau^{(2)}_{1}, \dots, \tau^{(2)}_n\}
   \]
      are collapsible subcomplexes of $K_{\phi}$ and they cover it.

    Moreover, each of $K_1$ and $K_2$ contains the whole 1-skeleton of $K_\phi$.
    Indeed, the complex $K_\phi$ is pure thus every edge of $K_\phi$ is contained in at
    least one triangle and in addition in at least two triangles if it is an
    edge in some of the spheres $S_i$. In order to get $K_1$ or $K_2$, 
    at most one triangle is removed from each $S_i$. Therefore, each edge of
    $K_\phi$ is still contained in at least one triangle of $K_1$ and in at
    least one triangle of $K_2$. Then
    Observation~\ref{obs:attached_torus_plgcat} implies that $K^+_\phi$ can be
    covered by two collapsible subcomplexes. 
     
  \item \emph{A subdivision $\left( K^+_{\phi} \right)^\prime$ of $K^+_{\phi}$
    can be covered by two collapsible subcomplexes $\implies \phi$ is satisfiable.}

First, we sketch the idea: Let $(K^+_1)'$ and $(K^+_2)'$ be the two collapsible subcomplexes of $(K^+_{\phi})'$
    covering it. (We point out that $(K^+_i)'$ is just a notation not implying
    that $(K^+_i)'$ is a subdivision of some complex $K^+_i$.)
    We want to verify the assumption in Proposition~\ref{p:construction}(iv) in
    order to deduce that $\phi$ is satisfiable. For this, we need a subdivision
    of $K_\phi$ such that removing $n$ triangles from this subdivsion yields a
    collapsible complex. In fact, our subdivision will be trivial, thus we need
    to find $n$ triangles in $K_\phi$ such that their removal yields a
    collapsible complex. We will take $(K^+_1)'$, say,
    and we will (essentially) deduce that in each $S_i$ there must
    be $\tau_i$ such that $(K^+_1)'$ must miss at least one triangle in the
    subdivided $\tau_i$. These triangles $\tau_i$ are the triangles we want to
    remove from $K_\phi$. However, we need several intermediate claims to
    deduce that the resulting complex is indeed collapsible. (We will use the
    second complex $(K^+_2)'$ only very sparingly in order to verify the
    assumptions of Lemma~\ref{lem:attached_torus}.)

    Let $K_\phi'$ be the subcomplex of $(K^+_{\phi})'$
    corresponding to $K_\phi$ in this subdivision. (Let us recall that this
    means that $K_\phi'$ is
    formed by simplices $\sigma \in (K^+_{\phi})'$ such that $\sigma \subseteq
    |K_\phi|$.) Let $K'_1 := K_\phi' \cap (K^+_1)'$. 

    \begin{claim}
      \label{c:smaller_collapsible}
The complex $K'_1$ is a collapsible subcomplex of $K_\phi'$.
    \end{claim}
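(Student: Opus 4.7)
My plan is to show $(K^+_1)' \searrow K'_1$; then Proposition~\ref{p:greedy_collapses} yields collapsibility of $K'_1$. Set $A_\tau := (K^+_1)' \cap T_\tau'$ for each triangle $\tau \in K_\phi$, where $T_\tau'$ denotes the subdivision of $T_\tau$ inside $(K^+_\phi)'$. Then $(K^+_1)' = K'_1 \cup \bigcup_\tau A_\tau$, and I will realise the collapse by reducing each $A_\tau$ to $\lambda_\tau'$ using only elementary collapses whose free face lies in $T_\tau'$ but is disjoint from $\lambda_\tau'$; such collapses are automatically valid in $(K^+_1)'$ because a face of $(K^+_\phi)'$ contained in $T_\tau$ and disjoint from $\lambda_\tau$ sits only in faces of $T_\tau'$.

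First I would verify the needed containments and homology. Applying Lemma~\ref{lem:attached_torus} to $T = T_\tau$ and $R = \overline{|K^+_\phi| \setminus T_\tau}$ (which meet precisely along $\lambda_\tau$) gives $\lambda_\tau \subseteq |(K^+_1)'|$, hence $\lambda_\tau' \subseteq A_\tau$. Combined with $A_\tau \cap K'_1 \subseteq T_\tau' \cap K_\phi' = \lambda_\tau'$, this yields $A_\tau \cap K'_1 = \lambda_\tau'$. By Claim~\ref{c:dim1} we have $\dim H_1(A_\tau) = 1$, and $A_\tau$ must be a proper subcomplex of $T_\tau'$, since otherwise the fundamental class $[T_\tau]$ would be a nontrivial $2$-cycle inside the collapsible complex $(K^+_1)'$. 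In particular $H_2(A_\tau) = 0$, and since $[\lambda_\tau]$ is already nonzero in $H_1(T_\tau)$, it must generate $H_1(A_\tau)$.

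The main obstacle is the purely combinatorial statement: whenever $A_\tau \supsetneq \lambda_\tau'$, the complex $A_\tau$ admits a free face outside $\lambda_\tau'$. I would argue this by case distinction. If $A_\tau$ contains a $2$-simplex and yet no edge off $\lambda_\tau'$ is free, then every such interior edge lies in exactly two triangles of $A_\tau$, so the set of $2$-simplices of $A_\tau$ is closed under adjacency in the dual graph of $T_\tau'$ with the dual edges across $\lambda_\tau$ removed. That graph is connected because $\lambda_\tau$ is non-separating on the torus, so $A_\tau$ would then contain every $2$-simplex of $T_\tau'$, contradicting $A_\tau \subsetneq T_\tau'$. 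If $A_\tau$ has no $2$-simplex, then $A_\tau$ is a graph; the absence of a free vertex outside $\lambda_\tau'$ means every interior vertex has degree at least $2$, and a short counting argument using $\dim H_1(A_\tau) = 1$ (so $|E| - |V| + c = 1$) together with the degree bounds forces $A_\tau = \lambda_\tau'$, again a contradiction.

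Iterating this step collapses each $A_\tau$ down to $\lambda_\tau'$. Since the collapses in distinct tori take place in their pairwise-disjoint interiors, they combine into a single collapse $(K^+_1)' \searrow K'_1$, and Proposition~\ref{p:greedy_collapses} concludes the proof.
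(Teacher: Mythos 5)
Your proposal follows essentially the same route as the paper's proof: show that $(K^+_1)'$ collapses onto $K'_1$ torus by torus, using Lemma~\ref{lem:attached_torus} to guarantee $\lambda'_\tau \subseteq (K^+_1)'$, the absence of nontrivial $2$-cycles in a collapsible complex to get $A_\tau \subsetneq T'_\tau$, connectivity of the dual graph of the cut-open torus to remove all triangles, and a homological argument to reduce the leftover graph to $\lambda'_\tau$; Proposition~\ref{p:greedy_collapses} then finishes. The packaging differs only mildly: the paper performs the greedy collapses first and analyzes the resulting complex $L'$ via its contractibility (an extra cycle could not be filled by a $2$-chain, hence would be homologically nontrivial), whereas you keep track of $\dim H_1(A_\tau)=1$ (from Claims~\ref{c:dim2} and~\ref{c:dim1}, preserved under collapses) and run an Euler-characteristic count. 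These are the same idea in substance.

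There is, however, one concrete corner that your combinatorial claim misses: degree-$0$ vertices, and more generally components of $A_\tau$ disjoint from $\lambda'_\tau$. An isolated vertex is a \emph{maximal} face, hence not free, so ``no free vertex outside $\lambda'_\tau$'' does not imply ``every interior vertex has degree at least $2$''. Indeed, $\lambda'_\tau$ together with one isolated vertex is a proper subcomplex of $T'_\tau$, strictly contains $\lambda'_\tau$, has $\dim H_1 = 1$, and admits no free face outside $\lambda'_\tau$; so the claim, with only the hypotheses you derive, is false, and the iteration can also get stuck dynamically (a path component disjoint from $\lambda'_\tau$ collapses to a single vertex and then halts). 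The fix is short and uses an observation you already made: every face of $T'_\tau$ not in $\lambda'_\tau$ has all its cofaces inside $T'_\tau$, so a component of $A_\tau$ disjoint from $\lambda'_\tau$ would be an entire connected component of the current ambient complex; that complex is obtained from the collapsible (hence connected) complex $(K^+_1)'$ by collapses, so it is connected and contains $\lambda'_\tau$ together with $K'_1$ --- a contradiction. With disjoint components excluded, your count does force $A_\tau = \lambda'_\tau$. To be fair, the paper's own proof glosses over the same point (it passes from ``no other cycle'' to ``$\partial\tau$ plus pendant trees'' without explicitly excluding separate tree components), so once this patch is added your argument is at a level of rigor comparable to the published one.
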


    \begin{proof}

      Our aim is to show that $(K^+_1)'$ collapses to $K'_1$. Then it follows
      from Proposition~\ref{p:greedy_collapses} that $K'_1$ is collapsible.

      We pick an arbitrary triangle $\tau$ of $K_\phi$.
      Recall that $T_\tau$ is the torus attached to $\tau$. (See
      Definition~\ref{def:enriched_complex}.) Let $T'_\tau$ be the
      subcomplex of $(K^+_{\phi})'$ corresponding to $T_\tau$.      
      Note that (the
      subdivsion of) $\partial \tau$ belongs to $(K^+_1)'$ by
      Lemma~\ref{lem:attached_torus}. We also observe that $T'_\tau$ is not a
      subcomplex of $(K^+_1)'$ otherwise $(K^+_1)'$ would contain a nontrivial
      $2$-cycle which is not possible if it is collapsible.
            
      Now we proceed similarly as in the proof of
      Observation~\ref{obs:collapsing_of_triangle}. We greedily perform
      collapses in $(K^+_1)'$ on simplices of $T'_\tau$ with the exception that we are not
      allowed to remove the simplices belonging to (the subdivision of)
      $\partial \tau$. (See Figure~\ref{f:collapse_torus} for a realistic
      example of the intersection of $(K^+_1)'$ and $T'_\tau$.) Let $L'$ be the resulting complex.
      We first observe that $L'$ contains no triangles of $T'_\tau$
      as at least one triangle is missing and the dual graph to our
      triangulation of $T'_\tau$ is connected even after removing the dual edges
      crossing $\partial \tau$. 
      Therefore, $L' \cap T'_\tau$ is a graph. Due
      to our restriction on collapses, subdivided $\partial \tau$ is inside
      this graph. We observe that no other (graph theoretic) cycle may belong to this graph.
      Indeed, another cycle would contain an edge which is not in $\partial
      \tau$, thus not contained in any triangle of $L'$. Therefore, such a
      cycle could not be filled with a $2$-chain, and thus it would be
      necessarily homologically nontrivial in $L'$ which is a contradiction with
      the fact that $L'$ is contractible (obtained by collapses from a
      collapsible complex). Thus, we may conclude that $L' \cap T'_\tau$ is
      the subdivided $\partial \tau$ with a collection of pendant trees. However,
      these pendant trees have to be actually trivial as they get collapsed
      during the greedy collapses.

      \begin{figure}
	\begin{center}
	  \includegraphics[page=2]{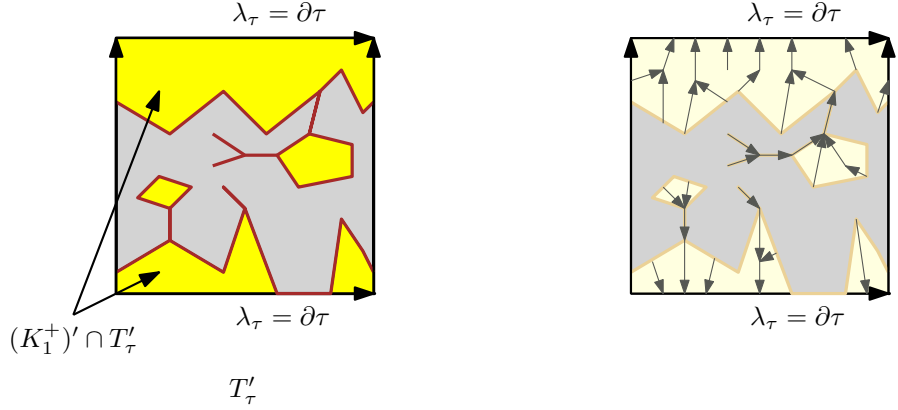}
	  \caption{Left: A realistic example how may $(K_1^+)'$ intersect
	  $T_\tau$. Right: Greedy collapses (only very schematically without
	  emphasizing the triangulation).}
	  \label{f:collapse_torus}
	\end{center}
 \end{figure}

      Altogether we have collapsed $(K^+_1)'$ to a complex $L'$ which agrees
      with $K_1'$ on $K_\phi'$ while we have removed all simplices of $T'_\tau$
      except those that belong to $K_\phi'$. Now we pick another triangle
      $\sigma$ of $K_\phi$ and we remove (via collapses) the simplices of $T'_\sigma$ except
      those belonging to $K_\phi'$ by an analogous approach. After passing
      through every triangle of $K_\phi$, we get exactly $K_1'$ as required.
    \end{proof}

\begin{claim}
  \label{c:smaller_nullhomologous}
  For every triangle $\tau \in K_\phi$, $\partial \tau$ is contained in $|K'_1|$ and it is nullhomologous in
  $|K'_1|$. 
\end{claim}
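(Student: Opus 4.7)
The plan is to obtain both halves of the claim by combining Lemma~\ref{lem:attached_torus} with the already-established Claim~\ref{c:smaller_collapsible}.

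For the containment $\partial\tau \subseteq |K'_1|$, I would apply Lemma~\ref{lem:attached_torus} to the triangle $\tau$ with the setup $P := |K^+_\phi|$, $T := T_\tau$, $R := \overline{|K^+_\phi| \setminus T_\tau}$, $Q_1 := |(K^+_1)'|$, and $Q_2 := |(K^+_2)'|$. By Definition~\ref{def:enriched_complex}, the subpolyhedra $R$ and $T$ intersect exactly in $\partial\tau$, playing the role of the longitude $\lambda$. The covering subpolyhedra $Q_1$ and $Q_2$ are collapsible, hence contractible, so the hypotheses of the lemma are met, and its conclusion gives $\partial\tau \subseteq Q_1$. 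Since $\partial\tau \subseteq |K_\phi| = |K'_\phi|$, it follows that
\[
  \partial\tau \;\subseteq\; |K'_\phi| \cap |(K^+_1)'| \;=\; |K'_1|.
\]

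For nullhomology, I would invoke Claim~\ref{c:smaller_collapsible} directly: it asserts that $K'_1$ is collapsible, so $|K'_1|$ is contractible and therefore $H_1(|K'_1|;\Z_2) = 0$. In particular, the $1$-cycle $\partial\tau$ is nullhomologous in $|K'_1|$.

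I do not expect a significant obstacle; the claim packages together what the preceding two results already provide. The one point worth flagging is that the second conclusion of Lemma~\ref{lem:attached_torus} (nullhomology in $R \cap Q_1$) is strictly \emph{weaker} than what we need, because $R$ includes all other attached tori in addition to $|K_\phi|$, so that $R \cap Q_1$ properly contains $|K'_1|$. This is precisely why the route via contractibility of $|K'_1|$ from Claim~\ref{c:smaller_collapsible} is the correct tool for the second half, rather than a second appeal to Lemma~\ref{lem:attached_torus}.
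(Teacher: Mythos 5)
Your proof is correct, and it is a genuine (and cleaner) simplification of the paper's argument for the second half of the claim. For the containment $\partial\tau \subseteq |K'_1|$ you do exactly what the paper does: apply Lemma~\ref{lem:attached_torus} with the same choice of $P$, $T_\tau$, $R$, $Q_1$, $Q_2$ (your $R = \overline{|K^+_\phi|\setminus T_\tau}$ coincides with the paper's ``$K_\phi$ plus all tori except $T_\tau$''), and intersect with $|K'_\phi|$. The divergence is in the nullhomology part. The paper uses the lemma's second conclusion ($\partial\tau$ nullhomologous in $R\cap Q_1$), then bridges the gap you flagged --- that $R\cap Q_1$ properly contains $|K'_1|$ --- by choosing $\tau$ to be the \emph{first} torus processed in the proof of Claim~\ref{c:smaller_collapsible}, identifying $R\cap Q_1$ with the polyhedron of the intermediate complex $L'$ there, and transferring nullhomology to $|K'_1|$ via the homotopy equivalence induced by the collapse of $L'$ to $K'_1$. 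You instead discard the lemma's nullhomology conclusion entirely and note that Claim~\ref{c:smaller_collapsible} already gives collapsibility, hence contractibility, of $|K'_1|$, so $H_1(|K'_1|;\Z_2)=0$ and any $1$-cycle in it bounds. Your route is shorter, depends only on the \emph{statement} of Claim~\ref{c:smaller_collapsible} rather than on the internals of its proof (no identification $R\cap Q_1 = |L'|$, no ordering trick), and it also makes the containment assertion explicit, which the paper leaves somewhat implicit. The paper's route buys nothing extra here beyond staying closer to the lemma's conclusions; your closing observation about why a second appeal to the lemma alone would not suffice is exactly the gap the paper's collapse argument is designed to fill.
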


\begin{proof}
  Let $P := |K^+_\phi| = |(K^+_\phi)'|$. Let $R$ be the polyhedron of $K_\phi$ and all tori of $K^+_\phi$ except
  $T_\tau$. Let $Q_1 := |(K^+_1)'|$ and $Q_2 := |(K^+_2)'|$. Then $R$, $|T_\tau|$,
  $Q_1$ and $Q_2$ satisfy the assumptions of Lemma~\ref{lem:attached_torus}.
  Then we deduce that $\partial \tau$ is nullhomologous in $R \cap Q_1$. Assume
  that 
  $\tau$ is such that $T_\tau'$ is the first torus to be removed in the proof
  of Claim~\ref{c:smaller_collapsible}, we can choose so. Then $R \cap Q_1$ is
  exactly the polyhedron of $L'$ in the proof of
  Claim~\ref{c:smaller_collapsible}. In particular, $L'$ collapses to $K'_1$.
  As collapses provide a homotopy equivalence, we deduce that $\partial \tau$ is nullhomologous in
  $|K'_1|$ as well.
\end{proof}
  
  Now, for any triangle $\tau \in K_\phi$ let $\tau'$ be the subcomplex of
  $K_\phi'$ corresponding to this triangle. 

  \begin{claim}
  \label{c:one_misses}
  \leavevmode
    \begin{enumerate}[(i)]
      \item If $\tau \in K_\phi$ is a triangle which does not belong to any of
	the spheres $S_1, \dots, S_n$, then $\tau'$ is a subcomplex of $K_1'$.
      \item For every $i \in [n]$, all triangles $\tau$ in $S_i$ except
	exactly one satisfy that $\tau'$ is a subcomplex of $K_1'$.
    \end{enumerate}
  \end{claim}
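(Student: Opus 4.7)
Let $U := \{\sigma \in K_\phi : \sigma' \subseteq K_1'\}$ denote the set of triangles of $K_\phi$ whose full subdivision lies inside $K_1'$. The plan is to extract information about $U$ by combining Claim~\ref{c:smaller_nullhomologous} (each $\partial\tau$ is nullhomologous in $K_1'$) with the homological rigidity provided by Proposition~\ref{p:construction}(ii).

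The key auxiliary step I would establish is a structural observation: if $c$ is any 2-chain in $K_1'$ whose boundary is supported on the 1-skeleton of $K_\phi$ (i.e., meets no interior edge of any subdivided triangle), then $c = \sum_{\sigma \in A} \sigma'$ for some $A \subseteq U$. I would prove this by restricting $c$ to each $\sigma'$: the restriction is a 2-chain in the triangulated disk $\sigma'$ whose boundary meets no interior edge, so by the connectedness of the dual graph of $\sigma'$ (as in the proof of Observation~\ref{obs:collapsing_of_triangle}) it must be either empty or all of $\sigma'$, the latter forcing $\sigma \in U$. Applying this to a 2-chain $c_\tau'$ in $K_1'$ with $\partial c_\tau' = \partial\tau$ (guaranteed by Claim~\ref{c:smaller_nullhomologous}), I obtain a set $A_\tau \subseteq U$ such that $\sum_{\sigma \in A_\tau} \sigma + \tau$ is a 2-cycle in $K_\phi$. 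Since $K_\phi$ has no 3-simplices, Proposition~\ref{p:construction}(ii) implies that this 2-cycle equals $\sum_{j \in I_\tau} S_j$ as a chain (using disjointness of the spheres), equivalently $A_\tau \triangle \{\tau\} = \bigcup_{j \in I_\tau} S_j$.

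Case (i) is then immediate: if $\tau$ lies in none of the spheres, then $\tau$ is absent from the right-hand side, so the symmetric-difference identity forces $\tau \in A_\tau \subseteq U$, giving $\tau' \subseteq K_1'$. For case (ii), fix $i \in [n]$. First, $S_i \not\subseteq U$, since otherwise $\sum_{\sigma \in S_i}\sigma'$ would be a nonzero 2-cycle in $K_1'$, contradicting $H_2(K_1') = 0$ (which follows from collapsibility via Claim~\ref{c:smaller_collapsible}); hence at least one triangle of $S_i$ is missing from $U$. To show at most one is missing, suppose for contradiction that $\tau_1, \tau_2 \in S_i$ both lie outside $U$. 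Applying the analysis above to $\tau_1$, the fact that $\tau_1 \notin A_{\tau_1}$ combined with $A_{\tau_1} \triangle \{\tau_1\} = \bigcup_{j \in I_{\tau_1}} S_j$ forces $\tau_1$ into the right-hand side, so by disjointness of the spheres $i \in I_{\tau_1}$; then every triangle of $S_i$ different from $\tau_1$, in particular $\tau_2$, must lie in $A_{\tau_1} \subseteq U$, a contradiction.

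The main technical obstacle is the structural observation in the second paragraph, which is precisely what bridges the subdivided complex $K_\phi'$ and the coarse combinatorics of $K_\phi$ where Proposition~\ref{p:construction}(ii) is formulated; once this translation is in hand, the rest reduces to a short $\Z_2$-linear algebra argument using disjointness of the spheres.
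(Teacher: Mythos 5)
Your proof is correct, and it rests on the same two pillars as the paper's own argument: the fillings of $\partial\tau$ inside $K_1'$ supplied by Claim~\ref{c:smaller_nullhomologous}, and the rigidity of the $2$-cycle space coming from Proposition~\ref{p:construction}(ii) (since $K_\phi$ has no $3$-simplices, every $2$-cycle is a $\Z_2$-sum of the spheres $S_1,\dots,S_n$). The genuine difference is in the organization, and it buys something. The paper argues triangle by triangle directly in the subdivision: for (i) it asserts that $\tau'+c$ would be a nontrivial cycle supported partially outside the spheres, and for (ii) it picks the triangle $\tau$ of $S_i$ whose subdivision contains the missing small triangle and argues that $c(\tau)$ must be the complementary hemisphere of the subdivided $S_i$; in both steps the translation between $2$-chains of $K_\phi'$ and $2$-chains of $K_\phi$ (where Proposition~\ref{p:construction}(ii) actually lives) is used implicitly. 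Your structural observation---any $2$-chain of $K_1'$ whose boundary lies on the $1$-skeleton of $K_\phi$ restricts to each subdivided triangle as either nothing or everything, by connectivity of the dual graph, the same tool the paper uses in Observation~\ref{obs:collapsing_of_triangle}---makes that translation explicit once and for all, after which both items reduce to symmetric-difference algebra over $\Z_2$; in particular your uniqueness argument for (ii), deriving a contradiction from two missing triangles $\tau_1,\tau_2$ in one sphere, replaces the paper's hemisphere-splitting argument and is, if anything, cleaner and less case-dependent. The only step worth spelling out is that $A_\tau\subseteq U$ uses the fact that $K_1'$ is a subcomplex (closed under taking faces), so containing all triangles of $\sigma'$ really does give $\sigma'\subseteq K_1'$; this is immediate.
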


\begin{proof}
  Let $\tau \in K_\phi$. Due to Claim~\ref{c:smaller_nullhomologous}, it has to
  be possible to fill the subdivision of $\partial \tau$ by some $2$-chain $c =
  c(\tau)$ in
  $K_1'$. 

  If $\tau$ does not belong to any of the spheres $S_1, \dots, S_n$, then the
  only option for $c$ is to contain all simplices of $\tau'$. Indeed, if there
  is another such $c'$, then considering $\tau'$ as a $2$-chain, we get a
  nontrivial $2$-cycle $\tau' + c'$ with support at least partially outside the
  spheres $S_1, \dots, S_n$ which contradicts
  Proposition~\ref{p:construction}(ii). Therefore, $\tau'$ must be a subcomplex
  of $K_1'$ which concludes (i).

 Now for (ii), take $i \in [n]$.
Then $K'_1$ has to miss at least one triangle in $|S_i|$ otherwise subdivided
  $S_i$ forms a non-trivial $2$-cycle in $K'_1$ which is a contraction with
  Claim~\ref{c:smaller_collapsible}. Assume that $\tau$ in $S_i$ was chosen so
  that this missing triangle belongs to $\tau'$. Then $\partial \tau$ splits
  (subdivided) $S_i$ to two hemispheres; one of them is formed by $\tau'$ and
  another is formed by the union of subcomplexes $\sigma'$ taken over all
  triangles $\sigma$ in $S_i$ different from $\tau$. By using Proposition~\ref{p:construction}(ii)
  again, the only options are that $c = c(\tau)$ contains all the simplices of one or the
  other (subdivided) hemispheres. But the hemisphere of $\tau'$ is ruled out as
  $\tau'$ misses a triangle of $K'_1$. Thus $c$ has to be filled by the other
  hemisphere. Then we conclude (ii) for all simplices $\sigma$ in $S_i$ except
  exactly $\tau$ as required.
\end{proof}

In the light of Claim~\ref{c:one_misses}(ii), let $\tau_i$ be the unique triangle of $S_i$ such
that $\tau'_i$ is not a subcomplex of $K'_1$. Let $K^-_{\phi}$ be the
subcomplex of $K_\phi$ obtained by removing all triangles $\tau_1, \dots,
\tau_n$ and let $(K^-_{\phi})'$ be the subcomplex of $K'_\phi$ corresponding to
$K^-_{\phi}$. 
Note that
Claim~\ref{c:one_misses} implies that $(K^-_{\phi})'$ is a subcomplex of
$K'_1$. See Figure~\ref{f:comparison} for comparison of $K_\phi'$, $K_1'$ and
$K^-_\phi$ after using Claim~\ref{c:one_misses}.

\begin{figure}
  \begin{center}
    \includegraphics{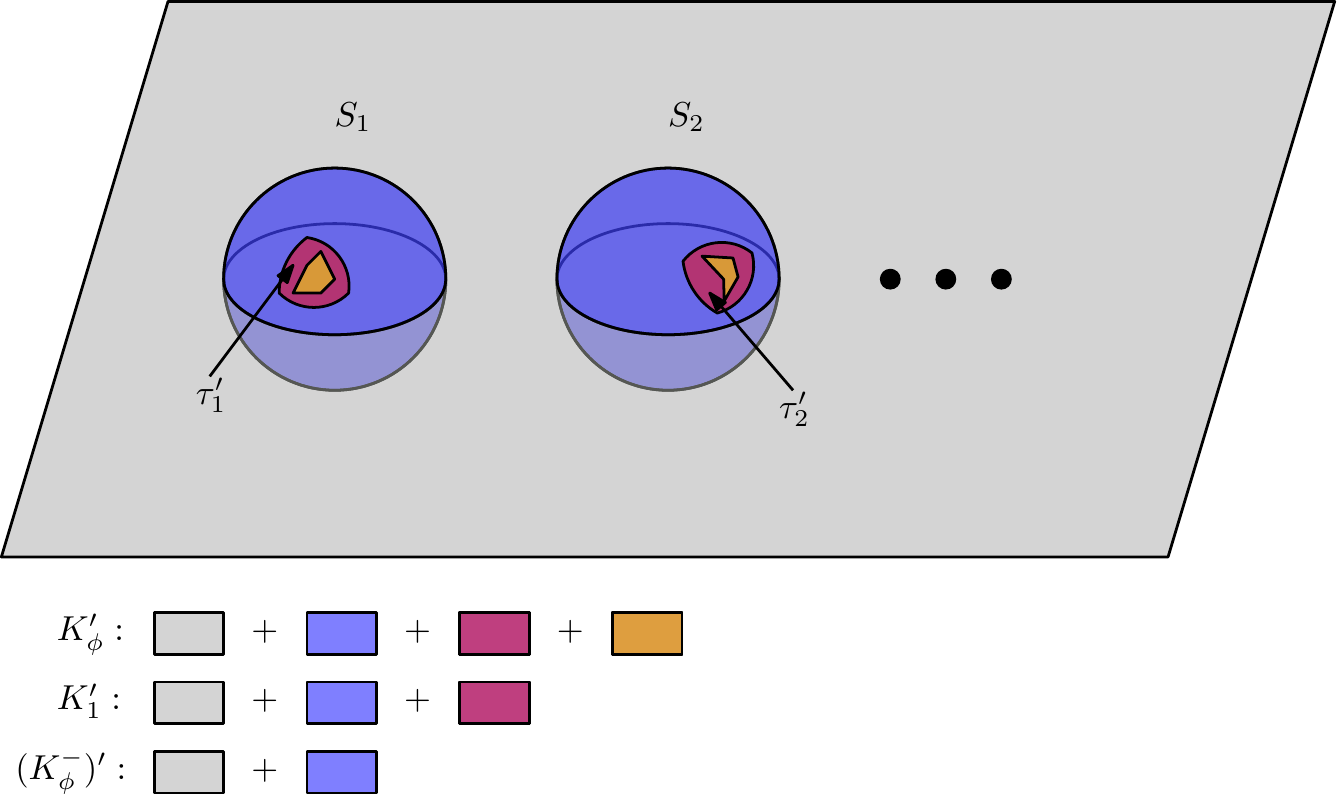}
    \caption{A schematic drawing of $K_\phi'$, $K_1'$ and
$K^-_\phi$. We emphasize that this not really a realistic drawing of $K_\phi'$
    (with the same polyhedron as $K_\phi$)
    as constructed in~\cite{goaoc-patak-patakova-tancer-wagner19}. We only
    attempt to draw as simple complex as possible satisfying conclusions (i)
    and (ii) of Proposition~\ref{p:construction} and so that $K_1'$ is
    collapsible. (The space inside the spheres is completely hollow.)}
    \label{f:comparison}
  \end{center}
\end{figure}

\begin{claim}
\label{c:even_smaller}
  $K'_1$ collapses to $(K^-_\phi)'$.
\end{claim}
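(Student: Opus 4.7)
The goal is to collapse the "extra" material in $K'_1$ down to $(K^-_\phi)'$. Set $R_i := K'_1 \cap \tau'_i$ for each $i \in [n]$. Since the spheres $S_1, \dots, S_n$ are pairwise disjoint in $K_\phi$ (Proposition~\ref{p:construction}(i)), the subdivided triangles $\tau'_i$ are pairwise disjoint in $K'_\phi$ and meet $(K^-_\phi)'$ exactly in their subdivided boundaries. Hence $K'_1 = (K^-_\phi)' \cup \bigcup_i R_i$ with $R_i \cap (K^-_\phi)' = \partial \tau'_i$ and $R_i \cap R_j = \emptyset$ for $i \neq j$. Any simplex in $R_i \setminus \partial \tau'_i$ lies in the interior of $\tau_i$, so all of its $K'_1$-supersets lie in $R_i$; therefore elementary collapses performed inside $R_i$ that avoid $\partial \tau'_i$ are automatically valid in $K'_1$, and collapses for different indices do not interfere. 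It thus suffices to show that each $R_i$ collapses onto $\partial \tau'_i$.

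Fix $i$. I would first establish that $H_1(R_i;\Z_2) \cong \Z_2$ with generator $[\partial \tau'_i]$. For the lower bound, the boundary map $\partial_2\colon C_2(\tau'_i) \to C_1(\tau'_i)$ is injective (since $H_2(\tau'_i) = 0$), so the only 2-chain in $\tau'_i$ with boundary $\partial \tau'_i$ is the full chain $[\tau'_i]$; as $R_i$ is missing at least one 2-simplex of $\tau'_i$, no such 2-chain lies in $R_i$, proving $[\partial \tau'_i] \neq 0$ in $H_1(R_i)$. For the upper bound, I would apply Mayer--Vietoris (with $\Z_2$-coefficients) to the decomposition $K'_1 = A \cup B$ where $A := R_i$ and $B := (K^-_\phi)' \cup \bigcup_{j \neq i} R_j$: then $A \cap B = \partial \tau'_i$ is a circle, and collapsibility of $K'_1$ (Claim~\ref{c:smaller_collapsible}) gives $H_1(K'_1) = 0$, forcing the map $H_1(\partial \tau'_i) \to H_1(R_i) \oplus H_1(B)$ to be surjective by exactness; hence $\dim H_1(R_i) \leq 1$.

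I would then collapse $R_i$ greedily, restricted to free faces not contained in $\partial \tau'_i$, and call the resulting subcomplex $L$. The dual-graph argument already used in the proof of Claim~\ref{c:smaller_collapsible} applies verbatim: an uncollapsed 2-simplex $\sigma$ in $L$ would need every edge interior to $\tau'_i$ shared with a second 2-simplex of $L$, and the connected dual graph of $\tau'_i$ would then propagate this, forcing all 2-simplices of $\tau'_i$ into $L \subseteq R_i$---contradicting that $R_i$ misses a 2-simplex. Thus $L$ is a graph. A short separate argument shows $R_i$ is connected: any component disjoint from $\partial \tau'_i$ would lie in the interior of $\tau'_i$ and hence be a component of the connected complex $K'_1$, a contradiction. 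Since collapses preserve homotopy type, $L$ is a connected graph with $H_1(L) = H_1(R_i) = \Z_2$. Such a graph is a single cycle with pendant trees, and the cycle must equal $\partial \tau'_i$ (which is already a nontrivial cycle in $L$). Greedy collapses of degree-$1$ vertices---which cannot lie on $\partial \tau'_i$, since every vertex of the cycle has degree at least $2$---remove the pendant trees, leaving exactly $\partial \tau'_i$.

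The main technical obstacle is the upper bound $\dim H_1(R_i) \leq 1$; this is where the global collapsibility of $K'_1$ enters, via Mayer--Vietoris. The remaining steps are direct variations on the dual-graph collapse techniques already developed earlier in this section.
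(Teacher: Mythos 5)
Your proof is correct, and its skeleton coincides with the paper's: both perform greedy collapses inside each subdivided triangle $\tau'_i$ while protecting the subdivided $\partial\tau_i$, both use connectivity of the dual graph of $\tau'_i$ (together with the $2$-simplex that $R_i$ must be missing, by the choice of $\tau_i$ after Claim~\ref{c:one_misses}) to conclude that no $2$-simplices survive, and both then dispose of pendant trees. Where you genuinely differ is in how the surviving graph is identified with $\partial\tau'_i$. The paper argues globally: the partially collapsed complex is obtained by collapses from the collapsible complex $K'_1$, hence is contractible, and any extra graph-theoretic cycle would contain an edge lying in no triangle, hence could not bound, contradicting contractibility. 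You instead localize: you decompose $K'_1 = R_i \cup B$ with $R_i \cap B = \partial\tau'_i$ and run Mayer--Vietoris with $H_1(K'_1) = 0$ (from Claim~\ref{c:smaller_collapsible}) to pin down $\dim H_1(R_i) = 1$ with nontrivial class $[\partial\tau'_i]$, and then invoke the structure of connected graphs with first Betti number one. Both routes ultimately rest on the same global input (contractibility of $K'_1$), but yours produces a clean, self-contained local statement ($R_i$ collapses onto $\partial\tau'_i$) and makes explicit two points the paper leaves implicit: that collapses performed inside $R_i$ away from $\partial\tau'_i$ are legal in $K'_1$ (because interior simplices of $\tau'_i$ have all their cofaces inside $\tau'_i$), and that $R_i$ is connected, which is genuinely needed --- a disconnected $R_i$ would leave an isolated vertex after collapsing a tree component. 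The cost is an extra homological computation that the paper's more direct argument avoids. One small point worth a line in your write-up: the fact that $R_i$ misses a $2$-simplex (not merely some simplex) follows because $\tau'_i$ is pure $2$-dimensional, so $\tau'_i \not\subseteq K'_1$ forces a missing $2$-simplex; this is exactly how the missing triangle is produced in the proof of Claim~\ref{c:one_misses}.
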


\begin{proof}
  The complexes $K'_1$ and $(K^-_\phi)'$ differ only so that $K'_1$ may contain
  some simplices of $\tau'_i$ for some $i$ (except those that subdivide
  $\partial \tau_i$) which are not in $(K^-_\phi)'$.
  
  Now, we continue analogously as in the proof of
  Observation~\ref{obs:collapsing_of_triangle}
  or~Claim~\ref{c:smaller_collapsible}. We greedily collapse all simplices of
  $K'_1$ in $\tau'_i$ except those that subdivide $\partial \tau_i$. We first deduce that
  the resulting complex contains no triangles of $\tau'_i$ as at least one
  triangle was missing in the beginning. Then we deduce that there is no
  graph-theoretic cycle among simplices of $\tau'_i$ except the one
  corresponding to $\partial \tau_i$ by the same argument as in the proof of
  Claim~\ref{c:smaller_collapsible} (using that $K'_1$ is collapsible). Then,
  we deduce that among the simplices of $\tau'_i$ only 
  the simplices subdividing $\partial \tau_i$ remain in the complex. After
  repeating this approach for every $i \in [n]$ we obtain $(K^-_\phi)'$.
\end{proof}

Now, we have acquired enough tools to conclude the case (b) and therefore to conclude the
proof of the theorem. From Claims~\ref{c:smaller_collapsible}
and~\ref{c:even_smaller} and Proposition~\ref{p:greedy_collapses} we deduce
that $(K^-_\phi)'$ is collapsible. By Lemma~\ref{lem:collapsible_subdivision} we
deduce that $K^-_\phi$ is collapsible. Finally, by
Proposition~\ref{p:construction}(iv) we deduce that $\phi$ is satisfiable.
\end{enumerate}
\end{proof}

\section*{Acknowledgments}
We thank to three anonymous referees for providing very useful comments.

\bibliographystyle{alpha} 
\newcommand{\etalchar}[1]{$^{#1}$}

\appendix

\section{Appendix} \label{sec:appendix}

Our aim in the appendix is to verify Remark~13
in~\cite{goaoc-patak-patakova-tancer-wagner19} which is stated but not proved
in~\cite{goaoc-patak-patakova-tancer-wagner19}. The exact statement we need is
given by the following proposition. We will provide all the necessary detail in
order to verify correctness of Remark~13
of~\cite{goaoc-patak-patakova-tancer-wagner19}. On the other hand, we warn the
reader that our proof is not self-contained but it relies on the construction
of $K_\phi$ and partially the notation
in~\cite{goaoc-patak-patakova-tancer-wagner19}; thus it is necessary to consult
the contents of~\cite{goaoc-patak-patakova-tancer-wagner19}.

\begin{proposition}
\label{p:wedge_spheres}
  The complex $K_\phi$ from~\cite{goaoc-patak-patakova-tancer-wagner19} is
  homotopy equivalent to the wedge of $n$ $2$-spheres (where $n$ is the number
  of variables).
\end{proposition}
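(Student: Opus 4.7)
The plan rests on the general principle that every simply connected $2$-dimensional CW complex $X$ whose integral second homology is free of rank $n$ is homotopy equivalent to $\bigvee_{i=1}^n S^2$: one obtains this from Hurewicz (which gives $\pi_2(X) \cong H_2(X;\mathbb{Z}) \cong \mathbb{Z}^n$) combined with Whitehead's theorem applied to the map $\bigvee_{i=1}^n S^2 \to X$ sending each wedge summand to a generator of $\pi_2(X)$ represented by the fundamental class of $S_i$. Thus the proposition reduces to two statements about $K_\phi$: that it is simply connected and that $H_2(K_\phi; \mathbb{Z}) \cong \mathbb{Z}^n$ with the spheres $S_1, \dots, S_n$ furnishing a basis.

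The homological part can be handled cleanly once simple connectivity is in hand. From~\cite[Proposition~12]{goaoc-patak-patakova-tancer-wagner19} we have $\tilde\chi(K_\phi) = n$, and Proposition~\ref{p:construction}(ii) in the excerpt gives $H_2(K_\phi; \mathbb{Z}_2) \cong \mathbb{Z}_2^n$. If $\pi_1(K_\phi) = 0$, then by Hurewicz $H_1(K_\phi; \mathbb{Z}) = 0$, so $\tilde\chi = n$ forces $\dim_\mathbb{Q} H_2(K_\phi; \mathbb{Q}) = n$. The universal coefficient theorem, applied with $H_1(K_\phi; \mathbb{Z}) = 0$ and $H_2(K_\phi; \mathbb{Z}_2) \cong \mathbb{Z}_2^n$, then rules out torsion in $H_2(K_\phi; \mathbb{Z})$, giving $H_2(K_\phi; \mathbb{Z}) \cong \mathbb{Z}^n$. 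That the spheres $S_i$ form a $\mathbb{Z}$-basis can be extracted by rerunning the $\mathbb{Z}_2$-argument for Proposition~\ref{p:construction}(ii) verbatim with integer coefficients, since the relevant cycles and boundaries are already visible over $\mathbb{Z}$ in the construction.

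The main obstacle is therefore simple connectivity of $K_\phi$, and I expect this to be the delicate step. My plan is to apply an iterated Van~Kampen argument along the gadget decomposition of Section~4 of~\cite{goaoc-patak-patakova-tancer-wagner19}: each individual variable gadget, clause gadget and connecting piece is assembled from contractible pieces and is itself simply connected by design, and the pairwise intersections along which they are glued are also simply connected (typically arcs or disks). The delicate bookkeeping is that the gadgets are glued in an intricate pattern, so one has to verify that the resulting nerve-like structure is tree-like at the level of fundamental groups, which is what makes the Van~Kampen computation collapse to the trivial group. As a backup route, one can instead try to show directly that $K_\phi^\circ := K_\phi \setminus \{\tau_1, \dots, \tau_n\}$ is contractible for any choice of $\tau_i \in S_i$ disjoint from the attachment regions $D[\ell(u_i)]$: then $K_\phi$ is built from a contractible space by attaching $n$ two-cells along nullhomotopic loops, giving $K_\phi \simeq K_\phi^\circ \vee S^2 \vee \cdots \vee S^2 \simeq \bigvee_{i=1}^n S^2$ directly. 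This alternative sidesteps the explicit $\pi_1$ computation but requires a gadget-by-gadget deformation-retraction argument, essentially a homotopy-level version of the collapsing sequences used in Section~7 of~\cite{goaoc-patak-patakova-tancer-wagner19}, now decoupled from satisfiability of $\phi$.
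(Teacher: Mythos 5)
Your Hurewicz--Whitehead framework is sound: a simply connected $2$-dimensional CW complex with $H_2 \cong \Z^n$ is indeed homotopy equivalent to a wedge of $n$ $2$-spheres (in fact the UCT step is superfluous, since for a $2$-complex $H_2(\,\cdot\,;\Z)$ is a subgroup of the free abelian group of $2$-chains and hence automatically free). The genuine gap is that the one statement carrying all the weight---simple connectivity of $K_\phi$---is never proved. You explicitly defer it to an iterated Van Kampen computation along the gadget decomposition whose ``delicate bookkeeping'' you do not carry out, and your backup route (showing that $K_\phi^\circ$, the complex minus one triangle per sphere, is contractible via gadget-by-gadget deformation retractions ``decoupled from satisfiability'') runs straight into the heart of the difficulty: the collapsing sequences of Section~7 of~\cite{goaoc-patak-patakova-tancer-wagner19} exist only when $\phi$ is satisfiable, and the construction is engineered precisely so that combinatorial simplification of $K_\phi^\circ$ is hard in general. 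Nothing in your outline indicates what the homotopy-level replacement for those collapses would be, so as written the proposal is a plan whose central step is missing.

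For comparison, the paper's proof avoids $\pi_1$, Hurewicz and Whitehead altogether. It follows the proof of Proposition~12 of~\cite{goaoc-patak-patakova-tancer-wagner19}, which computes $\tilde\chi(K_\phi)=n$ by repeatedly writing an intermediate complex as $K_1 \cup K_2$ with $K_2$ and $K_1 \cap K_2$ contractible and passing to $K_1$; the paper upgrades each such step from an Euler-characteristic identity to a homotopy equivalence $|K_1 \cup K_2| \simeq |K_1|$ (Lemma~\ref{l:homotopy}, contracting a contractible subcomplex), and the terminal complex of the decomposition is visibly a wedge of $n$ $2$-spheres. Note that this uses exactly the contractibility data your Van Kampen plan would need anyway---but once that data is in hand it already yields the full homotopy equivalence, so the Hurewicz/Whitehead superstructure buys nothing. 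If you want to complete your proposal, the shortest path is to import that same decomposition from~\cite{goaoc-patak-patakova-tancer-wagner19}, at which point you will have essentially rediscovered the paper's argument.
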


In the proof, we need the following simple lemma.

\begin{lemma}
\label{l:homotopy}
  Let $K_1, K_2$ be simplicial complexes. Assume that $K_1 \cap K_2$ and $K_2$
  are contractible, then $K_1$ and $K_1 \cup K_2$ are homotopy equivalent.
\end{lemma}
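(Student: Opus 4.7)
The plan is to reduce the claim to the standard fact that collapsing a contractible subcomplex in a CW pair yields a homotopy equivalence (see, e.g., \cite[Prop.~0.17]{hatcher02}). Concretely, I would argue via the following chain of homotopy equivalences and one natural homeomorphism.

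First, I would note that for simplicial complexes, inclusions of subcomplexes are closed cofibrations, so both pairs $(K_1 \cup K_2, K_2)$ and $(K_1, K_1 \cap K_2)$ have the homotopy extension property. Since $K_2$ is contractible, the quotient map
\[
q_1\colon K_1 \cup K_2 \to (K_1 \cup K_2)/K_2
\]
is a homotopy equivalence. Similarly, since $K_1 \cap K_2$ is contractible, the quotient map
\[
q_2\colon K_1 \to K_1/(K_1 \cap K_2)
\]
is a homotopy equivalence.

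Next, I would observe that there is a canonical homeomorphism
\[
(K_1 \cup K_2)/K_2 \;\cong\; K_1/(K_1 \cap K_2),
\]
given by sending the class of a point $x \in K_1 \subseteq K_1 \cup K_2$ to its class in $K_1/(K_1 \cap K_2)$; this is well-defined because every point of $K_2$ gets collapsed in the left-hand side, and $K_1 \cap K_2$ is exactly the set of points of $K_1$ that were also in $K_2$. Combining this homeomorphism with the two homotopy equivalences above gives $K_1 \simeq K_1 \cup K_2$, as required.

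Alternatively, I would mention the more conceptual route via the gluing lemma for homotopy pushouts: the square with $K_1 \cap K_2$ in the corner and $K_1, K_2$ on the sides is a pushout along cofibrations, and since $K_1 \cap K_2 \hookrightarrow K_2$ is a homotopy equivalence (both spaces being contractible), the gluing lemma implies that the parallel map $K_1 \hookrightarrow K_1 \cup K_2$ is also a homotopy equivalence. There is no real obstacle here: everything rests on the cofibration property of subcomplex inclusions, which is automatic in the simplicial setting, and the only small care needed is to justify the identification $(K_1 \cup K_2)/K_2 \cong K_1/(K_1 \cap K_2)$, which is a direct check on points.
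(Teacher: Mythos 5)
Your proof is correct and is essentially the same as the paper's: both establish the chain $K_1 \cup K_2 \simeq (K_1 \cup K_2)/K_2 \cong K_1/(K_1 \cap K_2) \simeq K_1$ using the standard fact that collapsing a contractible subcomplex is a homotopy equivalence (the paper cites Matou\v{s}ek's Proposition~4.1.5 where you cite Hatcher's Proposition~0.17). Your additional remarks on cofibrations and the gluing-lemma alternative simply flesh out details the paper leaves implicit.
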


\begin{proof}
  It is well known that contracting a contractible subcomplex is a homotopy
  equivalence~\cite[Proposition~4.1.5]{matousek03}. Therefore, we get 
\[
|K_1 \cup K_2| \simeq |K_1 \cup K_2|/|K_2| = |K_1|/|K_1 \cap K_2| \simeq |K_1|
\]
as required.
\end{proof}

\begin{proof}[Proof of Proposition~\ref{p:wedge_spheres}]
  We follow essentially in verbatim the proof of Proposition~12
  in~\cite{goaoc-patak-patakova-tancer-wagner19}. The only difference is that
  we use Lemma~\ref{l:homotopy} instead of the weaker statement
  in~\cite{goaoc-patak-patakova-tancer-wagner19}: If~$K_1 \cap K_2$ and $K_2$
  are contractible, then $\tilde \chi (K_1 \cup K_2) = \tilde \chi(K_1)$ where
  $\tilde \chi$ stands for the reduced Euler characteristic.

As described in the proof of Proposition~12
  in~\cite{goaoc-patak-patakova-tancer-wagner19}, the complex $K_\phi$ can be
  transformed into certain complex $K'$ by a series of steps when we decompose
  some intermediate complex as $K_1 \cup K_2$ where $K_2$ and $K_1 \cap K_2$
  are contractible, and then we replace the intermediate complex with $K_1$.
  Therefore, using Lemma~\ref{l:homotopy} we get that the resulting complex
  $K'$, after performing all these steps is homotopy equivalent to $K_\phi$.

  By a further homotopy equivalence Goaoc et al.,
  \cite{goaoc-patak-patakova-tancer-wagner19}, obtain another complex $K''$ which is already (obviously) homotopy equivalent
  to the wedge of $n$ $2$-spheres. Therefore, $K_\phi$ is homotopy equivalent to the
  wedge of $n$ $2$-spheres.
\end{proof}

\end{document}